\documentclass[a4paper,10pt]{siamltex} 
\usepackage{amsmath,amssymb,amsfonts,graphicx,epsfig,color,url}
\usepackage{latexsym,mathrsfs,cite} 
%\usepackage[square, comma, sort&compress]{natbib}
%\bibfont\Large

\newcommand{\eps}{\varepsilon}

\definecolor{colorJRblue}{rgb}{0.,0.,1.}%67}
\definecolor{colorJRred}{rgb}{1.,0.,0.}%67}

\def\front{\mathrm{front}}
\def\base{\mathrm{base}}
\def\calO{\mathcal{O}}
\def\calL{\mathcal{L}}
\def\calN{\mathcal{N}}
\def\calM{\mathcal{M}}

 % factor of uv^2 in u-equation of example... unused...

% weak notation
\newcommand{\uw}{\check{u}}
\newcommand{\vw}{\check{v}}
\newcommand{\pw}{\check{p}}

\newcommand{\Uw}{\check{U}}
\newcommand{\Vw}{\check{V}}

\newcommand{\alw}{\check{\alpha}}
\newcommand{\bew}{\check{\beta}}

% strong notation
\newcommand{\us}{\hat{u}}
\newcommand{\ust}{u^\rmf}
\newcommand{\Vs}{\hat{V}}
\newcommand{\Us}{\hat{U}}
\newcommand{\vs}{\hat{v}}
\newcommand{\ps}{\hat{p}}
\newcommand{\qs}{\hat{q}}

\newcommand{\tU}{\tilde{U}}
\newcommand{\tV}{\tilde{V}}
\def\tF{\tilde{F}}

%Jens:
\newcommand{\R}{\mathbb{R}}
\newcommand{\rmd}{\mathrm{d}}

\newcommand{\rma}{\mathrm{a}}
\newcommand{\rmf}{\mathrm{f}}
\newcommand{\rms}{\mathrm{s}}

\newcommand{\sgn}{\mathrm{sgn}}
\newcommand{\calD}{\mathcal{D}}
\newcommand{\calF}{\mathcal{F}}
\newcommand{\pf}{\mathrm{pf}}
\newcommand{\sn}{\mathrm{sn}}

\def\H{J}
\def\const{e}

\newtheorem{Remark}{Remark}
\newtheorem{Proposition}{Proposition}

\newtheorem{Hypothesis}{Hypothesis}%[1]%
\newtheorem{Definition}{Definition}%[1]%
%\newenvironment{Hypothesis}{Hypothesis}[1]%
%  {\begin{trivlist}\item[]\textbf{Hypothesis~#1 }\begin{em}}%
%  {\end{em}\end{trivlist}}

%\newenvironment{proof}
% {\begin{trivlist} \item[]{\bf Proof. }}
% {\hspace*{\fill}$\rule{.3\baselineskip}{.35\baselineskip}$\end{trivlist}}

\begin{document}

\title{First and second order semi-strong interaction in reaction-diffusion systems}

\author{Jens D.M. Rademacher\thanks{Centrum Wiskunde \& Informatica, Science Park 123, 1098 XG Amsterdam, the Netherlands, rademach@cwi.nl %[\today]
}}

\maketitle

\begin{abstract}
Spatial scale separation often leads to sharp interfaces that can be fully localized pulses or transition layer fronts connecting different states. This paper concerns the asymptotic interaction laws of pulses and fronts in the so-called semi-strong regime of strongly differing diffusion lengths for reaction-diffusion systems in one space dimension. An asymptotic expansion and matching approach is applied in a model independent common framework. First order semi-strong interaction is introduced as a general interface interaction type. It is distinct from the semi-strong interaction studied over the past decade, which is referred to as `second order' here. Laws of motion are derived for pulses as well as fronts in abstract systems with attention to the effect of symmetries. %For the second order case, fundamental differences between symmetric front and pulse interaction are derived. 
First order interaction for pulses is shown to be gradient-like under conditions that are numerically checked for a class of equations including the Gray-Scott and Schnakenberg models.
\end{abstract}

%\textit{PACS codes: } 10; 50; 60; 90; 220
\textit{MSC2010: } 35K57; 35B36; 35B25; 35C07; 37B35

\textit{Keywords: } reaction-diffusion systems; pulses and fronts; asymptotic expansion; 
transition layers; sharp interfaces; slow invariant manifold; Lyapunov functional

%%%%%%%%%%%%%%%%%%%%%%%%%%%%%%%%%%%%%%%%%%
\section{Introduction}
This paper concerns quasi-stationary patterns of reaction-diffusion systems in one space dimension with a scale dichotomy in the diffusion lengths of the form
\begin{equation}\label{e:rds}
\begin{array}{*{1}{rcl}}
\partial_t U &=& \hspace{3mm}D_u \partial_{xx}U + F(U,V;\eps)\\
\partial_t V &=& \eps^2 D_v \partial_{xx} V +G(U,V;\eps).
\end{array}
\end{equation}
Here $U\in\R^n$, $V\in\R^m$, and $D_u, D_v$ are diagonal matrices with positive entries, and $x\in D\subset\R$  an interval.

The parameter $0<\eps\ll 1$ is asymptotically small and yields the semi-strong limit as $\eps\searrow 0$. On the one hand, the localization of the $V$-components as $\eps\searrow 0$ to a point or jump discontinuity defines interface locations $x=r_j$, $j=1,\ldots,N$ of pulses or fronts, respectively. The shape of the interfaces is resolved on the small spatial scale $\xi = (x-r_j)/\eps$. On the other hand, the $U$-components remain continuous in $x$ as $\eps\searrow 0$ and smooth between interfaces. See Figure~\ref{f:profiles} for an illustration. 

\begin{figure}
   \centering
   \begin{tabular}{cc}
   \includegraphics[width=0.49\textwidth]{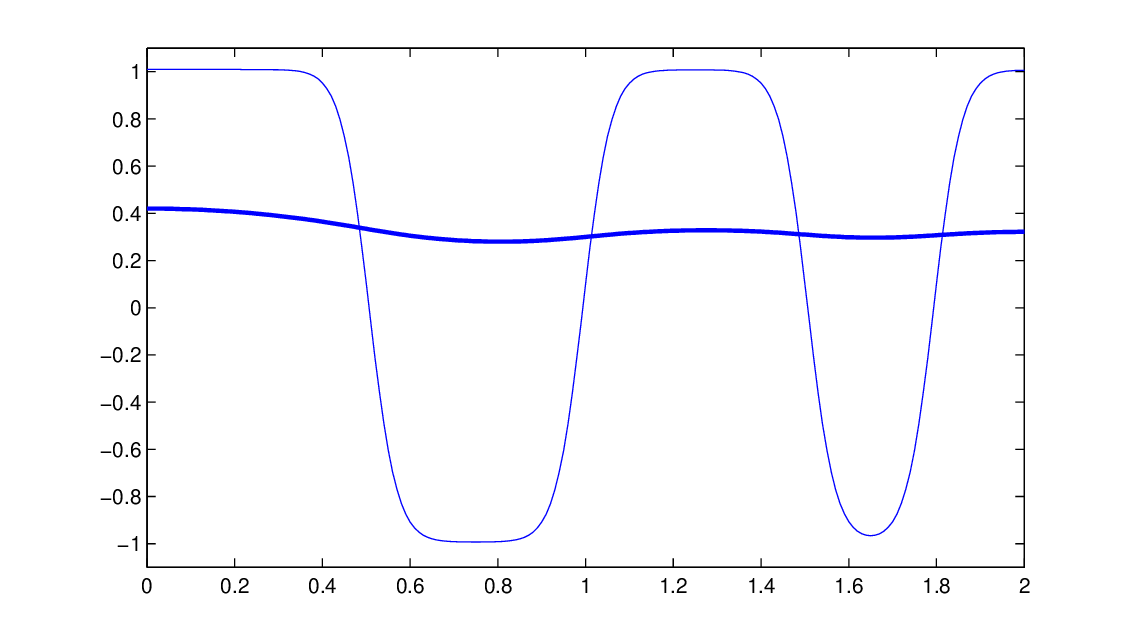}& \hspace{-8mm}
   \includegraphics[width=0.49\textwidth]{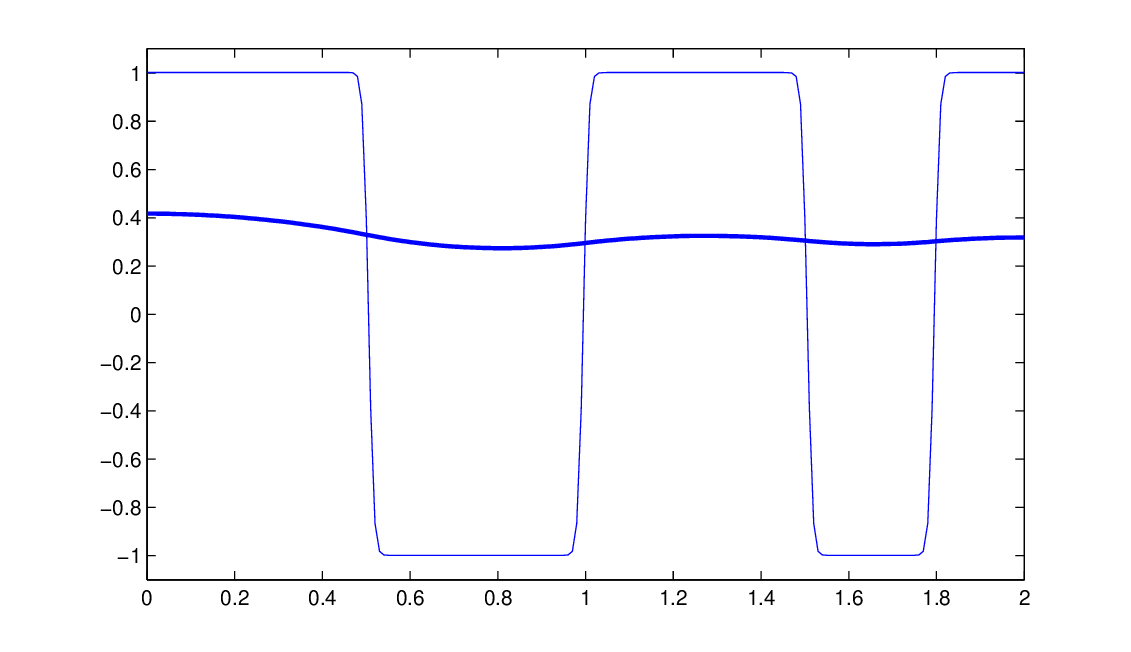}\\ 
   (a) & (b)\\
   \includegraphics[width=0.49\textwidth]{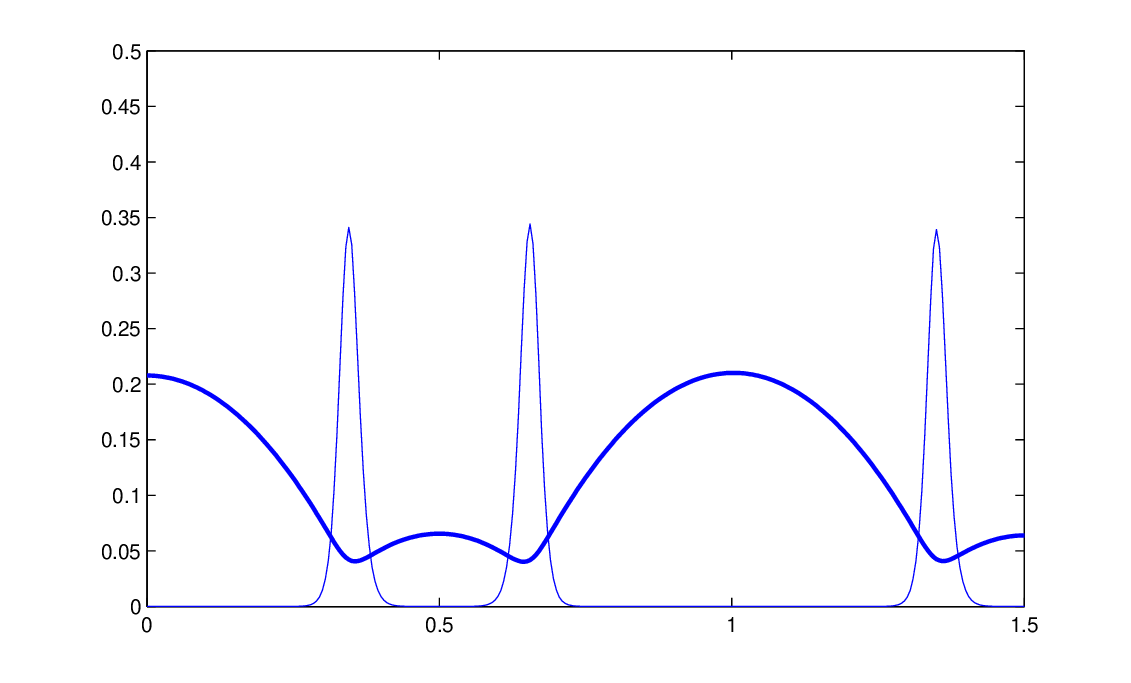} & \hspace{-8mm}
   \includegraphics[width=0.49\textwidth]{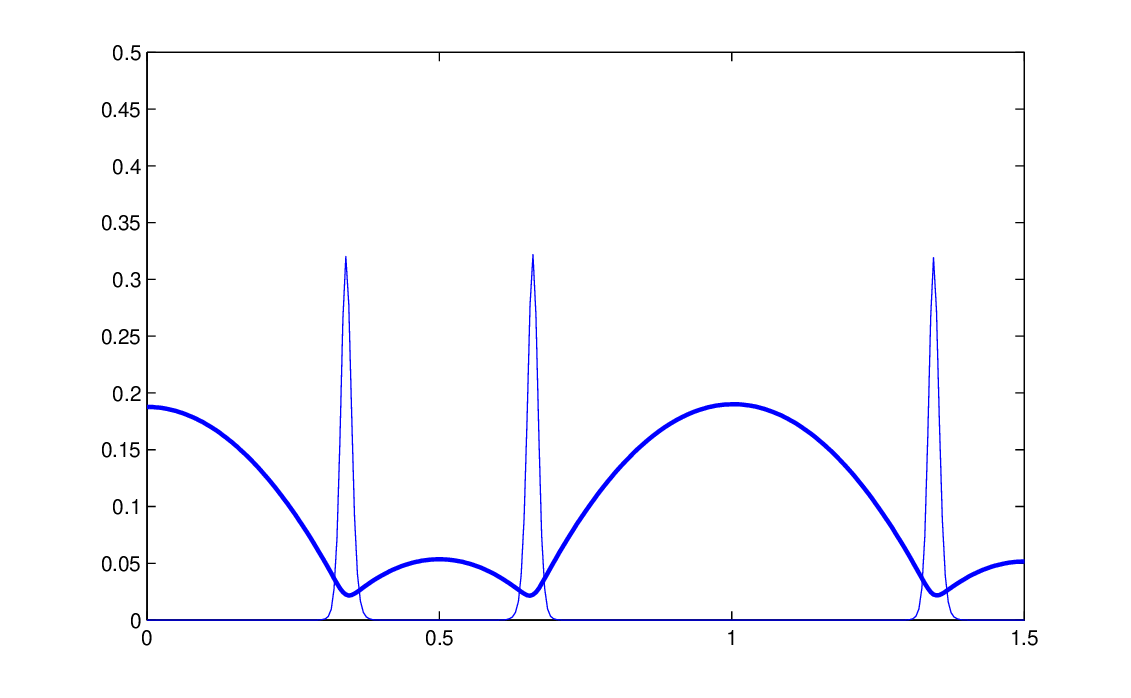} \\
   (c) & (d)
   \end{tabular}
   \caption{Snapshots of interface patterns for $D=[0,2]$ with Neumann boundary conditions. The thin lines show the $V$-component, the bold lines the $U$-component. (a,b): 4-front solutions to \eqref{e:ex-front}, where $\eps=0.05$ in (a) and $\eps=0.01$ in (b). %Front locations are near $x=0.5, 0.75, 1.5, 1.8$. 
 (c,d): 3-pulse solutions to \eqref{e:linsemi} with $V$ scaled to $\Vs= \eps V$, and $\eps=0.01$ in (c), $\eps=0.005$ in (d).}
   \label{f:profiles}
\end{figure}

%\begin{figure}
%\centering
%\begin{tabular}{cc}
%\scalebox{0.45}{\input{pulses2.pstex_t}} &
%\scalebox{0.45}{\input{fronts.pstex_t}}\\
%(a) & (b)
%\end{tabular}
%\caption{Sketches of semi-strong patterns of pulses (a) and fronts (b). We plot $V$ for $\eps>0$ dashed and solid for $\eps=0$; $U$ varies slowly for $\eps\geq 0$ (solid lines), and forms corners at $\eps=0$ for pulses. Typically, these interfaces will slowly move: $r_j=r_j(\eps t)$.}\label{f:ill}
%\end{figure}

\medskip
Systems of this form regularly occur in modelling, and slowly moving sharp interface patterns have been observed numerically as well as analytically treated for $n=m=1$ with a few exceptions. The literature is discussed in \S\ref{s:lit} below. 

\medskip
We are interested in the laws of motion of quasi-stationary interface patterns, which are characterized by $\frac{\rmd}{\rmd t} r_j \to 0$ as $\eps\to0$. Existence and smoothness of such solutions are \emph{assumed} (see Hypothesis~\ref{h:main}); proofs for some cases have been given in the literature. The purpose of this paper is threefold. First, to provide an abstract, model-independent unified framework for the study of semi-strong interaction of pulses and fronts.
Second, to introduce first order semi-strong interaction as a general type of interface interaction with  $r_j=r_j(\eps t)$ for \eqref{e:rds}. This kind of interaction is distinct from the more common `semi-strong interaction' \cite{DoeKap}, which has $r_j=r_j(\eps^2 t)$ and is referred to as `second order' here. 
Third, the distinct equations of motion for first and second order pulse and front interaction are derived and analyzed with attention to the effect of symmetry. %For second order interaction, the differences between pulse and front motion are explained. 

It is shown that first order semi-strong pulse interaction is gradient-like (in the sense of a Lyapunov-functional) under conditions that are numerically checked for a class of equations including the Gray-Scott and Schnakenberg models.
To facilitate the analysis, necessary conditions on $F$ and $G$ for semi-strong interaction are derived. These `standard forms' distinguish the orders of interaction and the type of interfaces, and help to a priori determine the kind of interaction that can occur. 
%However, the existence of interface patterns that interact semi-strongly is assumed. Sufficient conditions and rigorous proofs for the interaction laws are beyond the scope of this paper.

%\medskip
%The motivation for this study is on the one hand that localized structures are increasingly recognized as fundamental building blocks of spatio-temporal pattern formation, which makes their interaction laws highly relevant. On the other hand, this study has been driven by the search for a common framework of the various results on semi-strong interaction in specific models, which can explain the underlying structural distinctions and interaction laws from a model independent point of view. 

\medskip
One may view semi-strong interaction as middle ground between weak interaction and strong interaction. \textit{Weak interaction} arises when all components localize at interfaces, which therefore interact only through tails that are exponentially close to homogeneous steady states. For instance, a 2-pulse in this case has a spatial profile that resembles a 2-homoclinic orbit, that is, it twice makes an excursion from a saddle equilibrium and passes close to the saddle inbetween. The resulting interaction law structure is universal and yields exponentially slow motion: in \eqref{e:rds} with $n=0$ roughly $r_j = r_j(\exp(-\kappa_j t/\eps))$, where $\kappa_j$ is essentially the slowest spatial convergence rate to the saddle equilibrium. Such solutions are also called `meta-stable'. See \cite{FuscoHale,CarrPego, Ei,sanTAMS, ZelikMielke} and also \cite{Promislow}. \textit{Strong interaction} occurs, when interface distances are on the smallest spatial scale, but
%without this separation it is a priori unclear how to define interface locations and thus measure interface motion. %Geometric characteristics such as local extrema and inflection points are candidates, but the non-uniqueness. 
to the author's knowledge almost nothing is known in this case. An exception is that for scalar equations, such as Allen-Cahn, the gradient structure (and other ingredients) sometimes allows to study coarsening phenomena of domain walls as in \cite{Scheel}. `Pulse-splitting' and `pulse-annihilation' are strong interaction phenomena that have been observed in simulations in the semi-strong regime \cite{DoeKapPel, KoWaWe, KoWaWe1, KoWaWe2, MurOsi2, OsiSev, Rey}, but there are no rigorous results and essentially no theory. %Outside the class of reaction-diffusion systems one may think of soliton motion in integrable PDEs. However, even then solitons hardly interact when passing through each other. 

\medskip
In \textit{semi-strong interaction} the localization of only part of the components defines the interfaces. The components that do not localize drive the interaction, which is hence much stronger than the exponentially slow weak interaction. Indeed, while interface patterns in the Allen-Cahn equation interact weakly (are meta-stable),  it has been shown in \cite{HDKP} (see also \S \ref{s:bex-front} below) that a perturbation of the Allen-Cahn equation to the form \eqref{e:rds} increases the velocity to order $\eps^2$. In this paper we show that the semi-strong interaction laws are not universal, but generally come in two types: first and second order semi-strong interaction. Which of these occurs for a given model is not immediately clear, %The analysis in this paper shows how to determine a priori whether and what type of semi-strong interaction can occur for a given model. 
but the standard forms allow to a priori infer the scaling regimes in which either type of semi-strong interaction has a chance to occur. %It is hoped that this approach is useful for the study of other specific models, as finding scalings from scratch can be rather cumbersome. 
Moreover, the analysis explains order $\eps^2$ velocities in second order interaction as a result of additional symmetry in the limit $\eps\to 0$. And it shows a structural distinction: in the first order case each interface is to leading order driven only by its nearest neighbors, while in the second order case all other interfaces are relevant.

In various models the type of interaction is essentially determined by the amplitude scaling of a `feed' term: large feed generates first order interaction and small feed the slower second order interaction, which is in accordance with heuristic energy arguments. In particular, in these cases second order interaction is an asymptotic regime \emph{within} first order interaction.

%It is hoped that this approach is useful for the study of other specific models and provides a basis for a rigorous analysis on a model independent level, in particular for first order interaction.

\subsection{Relation to literature and models}\label{s:lit}

%As in the majority of the literature on the subject of semi-strong interaction (e.g. \cite{ChenWard, KoWaWe, KoWaWe1,KoWaWe2, MurOsi, MurOsi2, OsiSev, Rey, WaSuRu}), the interface patterns and equations of motion are derived here by leading order asymptotic matching.

A dichotomy in diffusion lengths occurs naturally when $U$ and $V$ of \eqref{e:rds} model densities of particles with strongly differing mobilities. Examples in which semi-strong interaction occurs include models from chemistry such as the Brusselator, the Gray-Scott model, and the Schnakenberg model, as well as the phenomenological Gierer-Meinhardt model for sea-shell patterns, and a phenomenological gas discharge model. See \cite{DoeKap,HeDoeKap1,KoErWe}, respectively, and the references therein. For scalar $U$ and $V$, which is the predominant case in the literature, semi-strong interaction has been studied in \cite{DoeKap,DKP,KoWaWe1,KoWaWe2,WaSuRu, Rey}; an example with two-dimensional $U$ and scalar $V$ has been considered in \cite{DoeHeKap,HeDoeKap1, HDKP}. Single fronts have been considered in models with two-dimensional $V$ and scalar $U$ in \cite{Ike1,Ike2, Mill-ex, Mill-stab}. 
Semi-strong interaction is not restricted to reaction diffusion equations, but has also been studied in a nonlinear Schr\"odinger equation coupled to a temperature field \cite{MP08}.

\medskip
It is customary in modeling to set very small diffusion coefficients to zero; examples are the famous FitzHugh-Nagumo equations for action potentials in nerve axons (see \cite{FHN}) and the Oregonator model for the Belousov-Zhabotinsky reaction (see \cite{oreg}). This is justified in these cases, because the arising interfaces are not singular at $\eps=0$: they do not depend on both the large and the small scale of \eqref{e:rds}. In addition, the interfaces in these models move with nonzero speed $c$ in the limit $\eps=0$, and solve the ordinary differential equation in $x$ in the comoving space variable $x-ct$ when setting the time derivatives to zero.

\medskip
Essentially all previously studied interface motion in the semi-strong regime is second order in our notation. %Partly as a result and partly as a motivation for the present study, in \cite{EhRaWo} the distinction of first and second order interaction has been analyzed in detail for the Schnakenberg model. 
The exeption is \cite{Rey}, where first order interaction in the Gray-Scott model was considered. See also \cite{MurOsi2}. However, it has not been recognised as such and the connection and relation to second order semi-strong interaction has not been made -- the present paper fills this gap.

%In \cite{EhRaWo}, where the Schnakenberg model is considered, this kind of interaction is studied in detail, extending the work of \cite{Rey}, where the related Gray-Scott model was considered. 
There is a fair amount of literature concerning stationary pulse and multi-pulse existence and stability in this context. See  \cite{DoeKap, DoeHeKap, HeDoeKap1,Ike1,Ike2, KoErWe,KoWaWe,KoWaWe1,KoWaWe2,Mill-ex,Mill-stab,MurOsi,OsiSev} and the references therein. Additional time scale separation is considered for instance in \cite{ChenWard, KoErWe, KoWaWe1}.
The existence of stationary pulse patterns in a semi-strong limit has been proven rigorously for a class of two-component systems in \cite{DoeKap} (and for special models in earlier work of these authors), and for a three-component model in \cite{DoeHeKap}. The arising equations of motion have been rigorously justified in a certain approximative sense and for special models in \cite{DKP, HDKP}, based on the renormalisation approach developed in \cite{Promislow}. 

\medskip
This paper is organized as follows. The next section introduces the concepts via basic examples. In \S\ref{s:main} the main results are summarized and an existence and smoothness hypothesis is formulated, which is used for the asymptotic expansion and matching arguments of subsequent sections. In \S\ref{s:standard} the standard forms are derived, and applied to a class of two-component models in \S\ref{s:apex}. The equations of motion for the second order semi-strong case are considered in \S\ref{s:gen-se-we} and for the first order semi-strong case in \S\ref{s:fast}. 

\medskip
\textbf{Acknowledgement.} This research has been supported in part by NWO cluster NDNS+. The author thanks Matthias Wolfrum, Julia Ehrt, Arjen Doelman and Michael Ward for helpful discussions. Special thanks go to  Peter van Heijster, Tasso Kaper and Michael Herrmann for comments on the manuscript.

%-----------------------------------------------------
\section{Basic examples}\label{s:bex}

To motivate and illustrate the subsequent more abstract analysis we consider the perhaps simplest models that support first and/or second order semi-strong interaction. In the following, subindices denote the order in the assumed $\eps$-expansions of the solutions; for instance, $V=V_0 + \eps V_1 + \calO(\eps^2)$.

\subsection{Second order front interaction}\label{s:bex-front}
Let us begin with the following two-component model that supports semi-strong front interaction.
\begin{equation}\label{e:ex-front}
\begin{array}{rl}
\partial_t U &= \partial_{xx}U - U + V\\
\partial_t V &= \eps^2 \partial_{xx}V  + V(1-V^2) + \eps U.
\end{array}
\end{equation}
This model is a perturbation of an Allen-Cahn equation and a case of the FitzHugh-Nagumo system with unusual parameters and scalings, and also a reduced version of the three-component system studied in \cite{HeDoeKap1}. In Figures~\ref{f:profiles}(a,b), \ref{f:4fronts} we plot examples of relevant solutions.

Assuming time-independence to leading order in $\eps$ implies fixed $V_0=V^*\in\{0,1,-1\}$ between interfaces and
\[
 \partial_{xx}U_0 = U_0 + V^*.
\]
This equation describes the (leading order) shape of the `background field' $U$ and has explicit solutions in terms of exponentials. We refer to these as large scale solutions.

\begin{figure}
   \centering
   \begin{tabular}{cc}
   \includegraphics[width=0.49\textwidth]{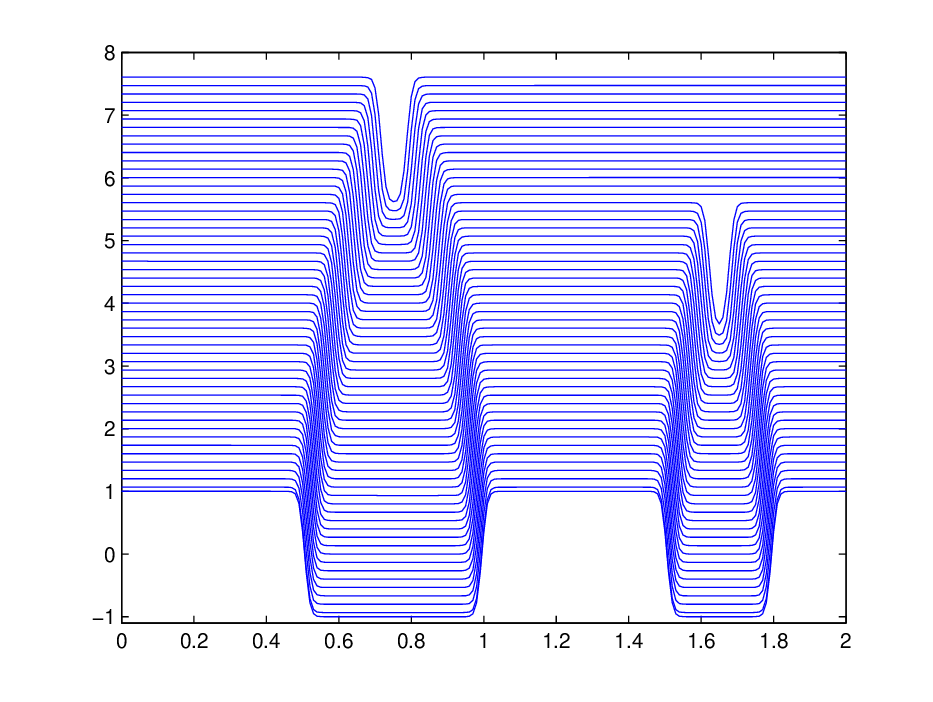} & \hspace{-8mm}
   \includegraphics[width=0.49\textwidth]{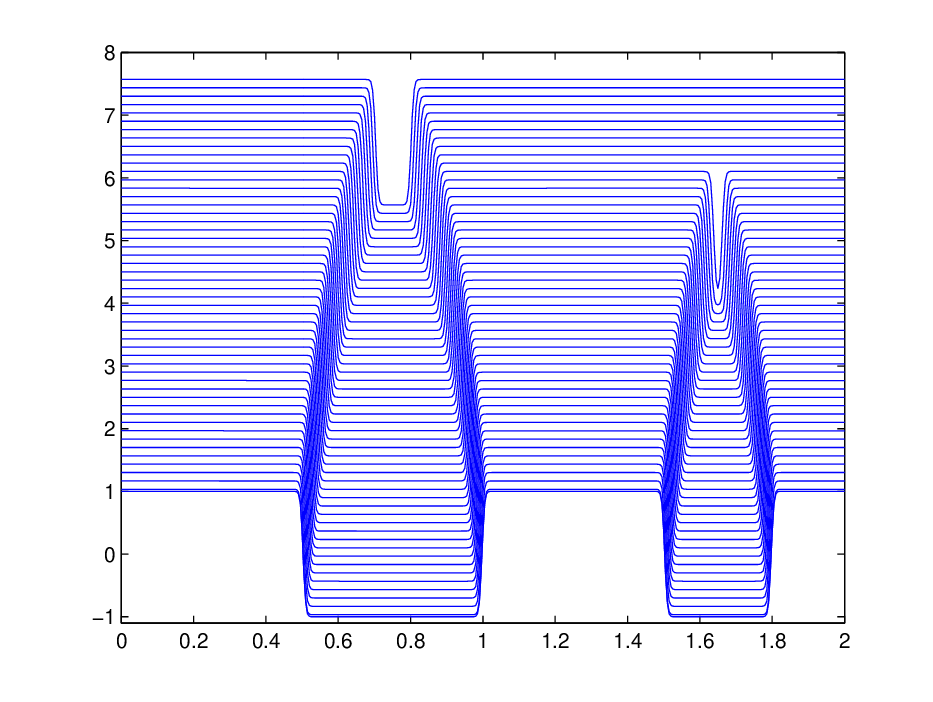} \\
   (a) & (b)
   \end{tabular}
   \caption{Staggered $V$-profiles of the evolving 4-front solution from Figure~\ref{f:profiles}(a,b) for (a) $\eps=0.01$ and up to time $2000$, (b) $\eps=0.005$ and up to time $8000$. The front locations move essentially on the same trajectories in both cases, which corroborates the $\eps^2$ time scale of second order semi-strong interaction. Before and after the coarsening event near $x=1.6$, where two fronts annihilate, the laws of motion from \S\ref{s:mainsnd} are expected to hold.}
   \label{f:4fronts}
\end{figure}

In order to resolve $V$ between the interfaces, it is appropriate to switch to the small spatial scale $\xi = x/\eps$. On this scale we use small letters $u, v$ and \eqref{e:ex-front} becomes
\begin{equation}\label{e:ex-front-sm}
\begin{array}{rl}
\eps^2\partial_t u &= \partial_{\xi\xi}u + \eps^2(-u + v)\\
\partial_t v &=\partial_{\xi\xi}v  + v(1-v^2) + \eps u.
\end{array}
\end{equation}

Assuming time-independence to leading order in $\eps$ implies constant $u_0$ (for boundedness), and
\begin{equation}\label{e:ex-front-sm0}
\partial_{\xi\xi}v_0  = -v_0(1-v_0^2),
\end{equation}
which has explicit heteroclinic solutions of $\tanh$-form that allow for connections between $V^*=\pm1$ in either direction. We refer to these as fronts, and they resolve the jumps at $\eps=0$ to leading order. Compare Figure~\ref{f:profiles}(a,b). %Note that connections involving the unstable state $V^*=0$ are not possible in \eqref{e:ex-front-sm0}. 

\begin{Remark}\label{r:sym2nd}
We also infer that fronts connecting $\pm1$ must be stationary on the $\xi$-scale: a travelling wave ansatz $\xi \to \xi-ct$ with velocity $c$ introduces a friction term $-c\partial_\xi v$ in \eqref{e:ex-front-sm0}. Since \eqref{e:ex-front-sm0} is Hamiltonian and $v=\pm1$ have the same energy, heteroclinic connections for $c\neq 0$ cannot exist. Therefore, fronts move at most with velocity $\eps^2$ on the $x$-scale and thus system \eqref{e:ex-front} can only support second order front interaction. See Figure~\ref{f:4fronts}. 
\end{Remark}

%The same reasoning shows that heteroclinic connections from $v=1$ to $v=0$ exist for any $c>0$ and connections from $0$ to $1$ for any $c<0$. While this suggests first order semi-strong interaction, $v=0$ is unstable for all $\eps=0$ and the precise velocity is selected by other mechanisms.

\medskip
A sequence of $N$ fronts generates a leading order solution if the large scale solutions between these match appropriately at the front locations (and the boundary conditions if present). This yields a system of algebraic equations, analogous to that in \cite{HDKP}, whose solutions are (locally) parametrized by the $N$ front locations $r_j, j=1,\ldots,N$.  Accordingly, the linearization of \eqref{e:ex-front} in such a solution has, to leading order in $\eps$, a kernel of dimension $N$ spanned by the front translations. This in turn gives solvability conditions for a leading order construction of a time-dependent solution and thereby provides the laws of motion. Notably, here the term $\eps U$ in the $V$-equation comes into play. In \S\ref{s:gen-se-we} we provide details of this procedure for \eqref{e:rds} in standard form.

\subsection{First order pulse interaction}\label{s:bex-pulse}

Let us now turn to a simple example where first order interaction occurs, namely
\begin{equation}\label{e:linsemi}
\begin{array}{rl}
\partial_t U &= \partial_{xx}U + \alpha - V\\
\partial_t V &= \eps^2 \partial_{xx}V - V + UV^2.
\end{array}
\end{equation}
This is a simplified Schnakenberg model, cf.\ \S\ref{s:apex}, and closely related to the `linear' model in \cite{Rey}. See also \cite{OsiSev}.

\begin{figure}
   \centering
   \begin{tabular}{cc}
   \includegraphics[width=0.49\textwidth]{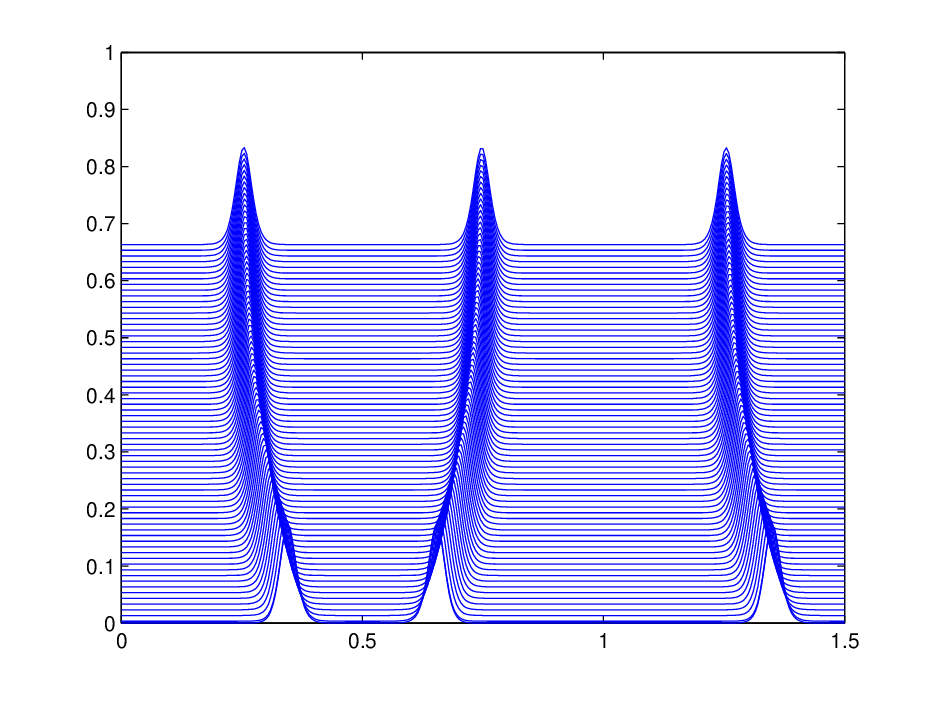} & \hspace{-8mm}\includegraphics[width=0.49\textwidth]{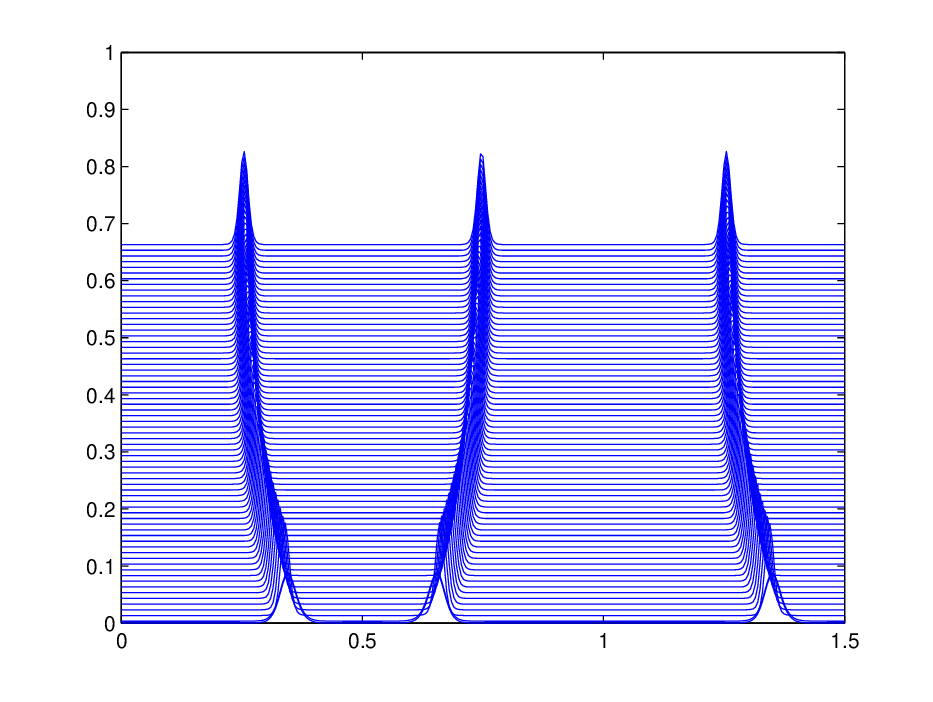} \\
   (a) & (b)
   \end{tabular}
   \caption{Staggered $\Vs$-profiles % actually $\Vs/2$ 
   of an evolving 3-pulse solution to \eqref{e:linsemi} for $\alpha=2.95$ and the initial condition plotted in Figure~\ref{f:profiles}(c,d). (a) $\eps=0.01$ over $200$ time units, (b) $\eps=0.005$ over $400$ time units. The pulse locations move essentially on the same trajectories in both cases, which corroborates the $\eps$ time scale of first order semi-strong interaction. Also the pulse amplitudes are similar, which reflects the scaling of $\Vs$.}
   \label{f:3pulses}
\end{figure}

As before, we assume time-independence to leading order in $\eps$. This gives $V_0\in\{0,1/U_0\}$ and 
\[
\partial_{xx} U_0 = -\alpha + V_0.
\]
Possible interfaces are again resolved on the small scale $\xi=x/\eps$, which, to leading order, gives constant $u_0$ (as in \eqref{e:ex-front-sm}$_1$) and 
\begin{equation}\label{e:linsemi-small1}
\partial_{\xi\xi} v_0 = v_0- u_0v_0^2.
\end{equation}
Since the latter only allows for homoclinic connections to $v_0=0$ (explicit $\cosh$-form), it follows that \eqref{e:linsemi} only supports pulse interaction. In contrast to the previous example this means $V_0=0$, which implies that the large scale solution  $U_0$ is independent of $V_0$ and therefore decoupled from the pulse locations. The result would be order $\eps^2$ motion  (see  Remark~\ref{r:sym2nd}) of pulses that is driven by a \emph{fixed} external field given by $U_0$, which satifies the boundary conditions on $D$. %Indeed, pulse motion in this case is slower than $\eps^2$. %See \S\ref{s:gen-se-we}.
Let us have a closer look at how this degeneracy arises. Matching of small and large scale for derivatives means that $\partial_{\xi\xi}u = \eps^2(v-\alpha)$ should be written as a first order system in the form
\begin{equation}\label{e:fstslopes}
\begin{array}{rl}
\partial_\xi u &= \eps p\\
\partial_\xi p &= \eps(v-\alpha).
\end{array}
\end{equation}
Now, matching at a pulse location $x=r_*$ requires that the leading order large scale derivatives $\partial_x U_0(r_*\pm):=\lim_{\delta\searrow 0}\partial_x U_0(r_*\pm\delta)$ equal the limiting leading order small scale derivatives $\lim_{\xi\to\pm\infty}p_0(\xi)$. From the equation for $p$ in \eqref{e:fstslopes} we infer that the assumption $\lim_{\eps\to 0}\eps v =0$ causes the decoupling, as this leads to $\partial_x U_0(r_*+) = \partial_x U_0(r_*-)$. Indeed, bounded spikes carry zero measure in the limit and are invisible to $U_0$.

\begin{figure}
   \centering
   \includegraphics[width=0.49\textwidth]{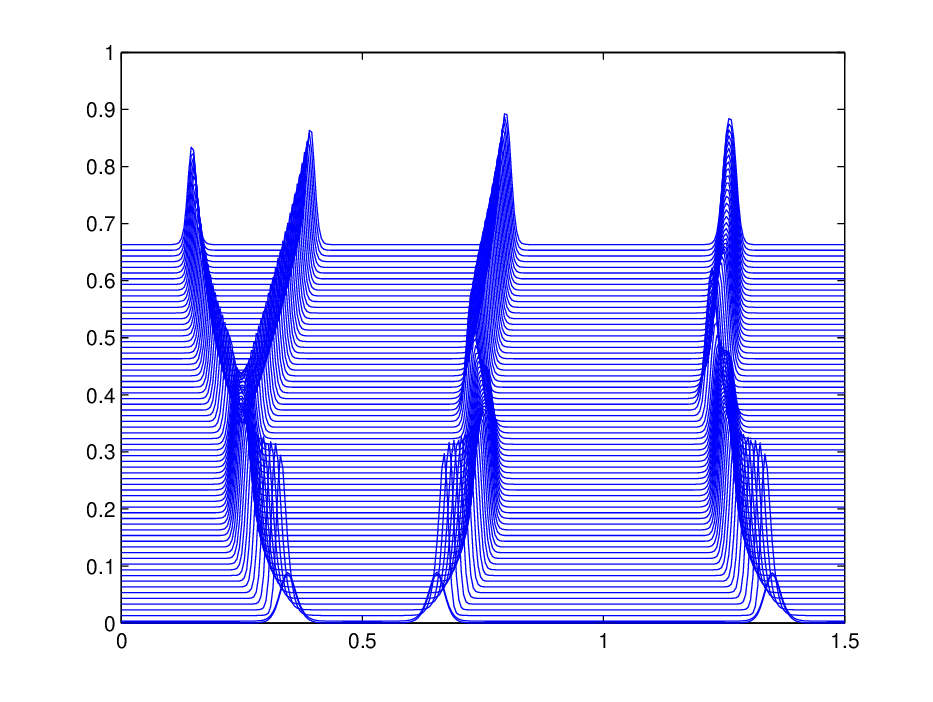}
   \caption{Staggered $\Vs$-profiles as in Figure~\ref{f:3pulses} for $\alpha=6$, $\eps=0.005$ and up to time $400$. At the pulse-splitting event near $x=0.3$, the asymptotic description breaks down.}
   \label{f:3pulse-split}
\end{figure}

\medskip
Therefore, let us make the ansatz $\Vs = \eps V$ in \eqref{e:linsemi}, which, for leading order stationary solutions, gives
\begin{equation}\label{e:linsemi2}
\begin{array}{rl}
\partial_{xx}U &= -\alpha + \eps^{-1}\Vs\\
\eps^2\partial_{xx}\Vs &= \Vs - \eps^{-1}U\Vs^2.
\end{array}
\end{equation}
Substituting expansions of $\Vs$ and $U$ in $\eps$ and comparing terms of equal order we find 
the solvability conditions $\Vs_0=0$ and $\Vs_1\in\{0, 1/U_0\}$. We shall argue below that $\Vs_1=0$ so that the leading order large scale problem still reads
\begin{equation}\label{e:linsemi-lg}
\partial_{xx}U_0 = -\alpha.
\end{equation}
However, on the small scale \eqref{e:linsemi-small1} turns into
\[
\partial_{\xi\xi} \vs = \vs - \eps^{-1} u \vs^2,
\]
so that the leading order term $u_0$ must vanish and we therefore set $u=\eps \us$ (but not $U=\eps\Us$). Hence, changing  \eqref{e:linsemi2} to the small scale gives to  leading order
\begin{equation}\label{e:linsemi-small2}
\begin{array}{rl}
%\partial_{\xi\xi}\Us &= -\eps\alpha + \Vs\\
%\partial_{\xi\xi}\Vs &= \Vs - \Us\Vs^2.
\partial_{\xi\xi}\us_0 &= \vs_0\\
\partial_{\xi\xi}\vs_0 &= \vs_0 - \us_0\vs_0^2.
\end{array}
\end{equation}
Recall that matching of small and large scale at $x=r_*$ means $U_0(r_*)=u_0=0$ and
\[
P_\pm := \partial_x U(r_*\pm) = \lim_{\xi\pm\infty} \partial_{\xi}\us(\xi),
\] 
which are thus asymptotic boundary conditions for \eqref{e:linsemi-small2} that are completed by $\lim_{\xi\to\pm\infty}\vs_0(\xi)=0$ for pulse-solutions. While $V_1=1/U_0$ appeared as a possible solution above, it is ruled out by the constraint $U_0=0$ at pulse locations.

Let us consider \eqref{e:linsemi-small2} as a 4-dimensional first order ODE with $\ps_0=\partial_\xi \us_0$ and $\qs_0=\partial_\xi \vs_0$. It has the two-dimensional invariant space $\{\vs_0=\qs_0=0\}$ consisting of affine $\us_0(\xi) = \ps_0(0) \xi + \us_0(0)$. It is straightforward to compute that this space is normally hyperbolic and each point has one-dimensional stable and one-dimensional unstable manifolds. Pulse-solutions lie in the intersection of the resulting three-dimen\-sional (center-) stable and unstable manifolds, which in four dimensions is generically a two-dimensional transverse intersection. This means that we can expect at best a curve of $P_\pm$-values for which pulse-solutions exist. In fact, due to symmetry, this curve is $\{P_+ + P_- =0\}$. However, the slopes of the large scale solutions already exhaust the two-dimensional $P_\pm$-space, hence  \eqref{e:linsemi-small2} does not provide sufficiently many solutions. %Indeed, this can numerically confirmed this. See also \S\ref{s:apex}.

The problem is that \eqref{e:linsemi-small2} enforces motion of at least order $\eps$ on the $\xi$-scale, which is appropriate for second order semi-strong interaction. In the present case, however, we have to look for first order semi-strong interaction. This is done by allowing for a comoving frame $\xi \to \xi-ct$ before reducing to the leading order equations, which turns \eqref{e:linsemi-small2} into  
\begin{equation}\label{e:linsemi-small3}
\begin{array}{rl}
%\partial_{\xi\xi}\Us &= -\eps\alpha + \Vs\\
%\partial_{\xi\xi}\Vs &= \Vs - \Us\Vs^2.
\partial_{\xi\xi}\us_0 &= \vs_0\\
\partial_{\xi\xi}\vs_0 &= -c\partial_\xi \vs_0 + \vs_0 - \us_0\vs_0^2.
\end{array}
\end{equation}
Indeed, the resulting motion is first order semi-strong as corroborated in Figure~\ref{f:3pulses}. In Figure~\ref{f:profiles}(c,d) we plot the solution profiles, which illustrate the large and small scale separation discussed here. In Figure~\ref{f:linear-bif} we plot a bifurcation diagram for \eqref{e:linsemi-small3} as explained in \S\ref{s:corenum}.  Figure~\ref{f:3pulse-split} shows the typical pulse-splitting phenomenon in equations such as \eqref{e:linsemi} for large `feed' $\alpha$, which cannot be resolved by the present analysis.

\subsubsection{Second order pulse interaction}
Numerical simulations show that for small $\alpha$ the phenomenology is more akin to the front interaction of the previous example and involves much slower pulse motion as well as coarsening by loss of pulses. This motivates to look for second order interaction for suitable $\alpha = o(1)$. Using the abstract `standard forms' derived in \S\ref{s:standard} below it is straightforward that the correct scalings for this are $\alpha=\eps^{1/2}\alw$, $U = \eps^{1/2}\Uw$, $V=\eps^{-1/2}\Vw$. See \S\ref{s:apex}.

Substituting this ansatz recovers \eqref{e:linsemi-lg} for $\Uw_0$, that is,
\[
\partial_{\xi\xi}\Uw_0 = -\alw,
\]
and also \eqref{e:linsemi-small3}$_2$ is recovered as
\[
\partial_{\xi\xi}\vw_0 = -c\partial_\xi \vw_0 + \vw_0 - \uw_0\vw_0^2.
\]
However, using \eqref{e:fstslopes}, we find that \eqref{e:linsemi-small3}$_1$ is transformed into
\begin{align*}
\partial_{\xi}\uw_0 &= 0\\
\partial_\xi\pw_0 &= -\vw_0.
\end{align*}
Hence, $\uw_0$ is constant and enters into the Hamiltonian equation for $\vw_0$ only as a parameter. Therefore, $c=0$ is required for homoclinic solutions, which implies \emph{second order} semi-strong interaction as in \S\ref{s:bex-front}. Recall Remark~\ref{r:sym2nd}. One may interpret this as slowing down of interfaces due to the small energy influx through $\eps^{1/2}\alw$, which is also in accordance with the aforementioned coarsening in this regime.

%%%%%%%%%%%%%%%%%%
\section{Main results}\label{s:main}

%Towards a more general theory of semi-strong interaction we investigate the principles underlying and distinguishing first and second order semi-strong interaction from a model independent perspective. These are therefore a priori independent of the specific algebraic form of $F$, $G$,  but yield constraints on these. Moreover, we treat pulses and fronts as interfaces in a unified manner. One aim is to derive `standard forms' of $F$ and $G$ for semi-strong interaction in which solutions are bounded for $\eps\geq 0$ as plotted in Figure~\ref{f:profiles}. 
In this section we summarize the main results for \eqref{e:rds}. We start by formulating the basic standing set of assumptions concerning existence and smoothness of quasi-stationary interface patterns. Its main purpose is to allow for asymptotic expansions and matching required later on. The part on the spectrum is used to derive the laws of motion in the second order case.

\begin{Hypothesis}\label{h:main}
There exist $T>0$, $V^\base, V^\front\in \R^m$, $\eps_0>0$, an open set of interface locations $r_{0,1}<\ldots < r_{0,N}$ in $\mathrm{int}(D)$, %with $\frac{\rmd}{\rmd t}r_j\to 0$ as $\eps\to0$ 
and a family of solutions $(U_\eps, V_\eps)$ to \eqref{e:rds} with the following properties for $\eps \in[0,\eps_0)$, $x\in D$ uniformly in $t\in[0,T]$. 

\noindent It is quasi-stationary, that is, $\partial_t(U_\eps,V_\eps) =\calO(\eps)$, bounded, 
%for each interface location
%$(U_\eps,V_\eps)$ is %continuous in $\eps$ for each $x$ and 
twice continuously differentiable in $(\eps,x)\neq(0,r_{0,j})$, $j=1,\ldots, N$, and the limiting $U_0$ is continuous and non-constant in $D$ while $V_0=V^\base$ or $V_0=V^\front$ for $x\neq r_{j,0}$. 

\noindent For each $r_{0,j}$,  $j=1,\ldots,N$, with $x$ rescaled to $\eps\xi + r_{0,j}$ the family is twice continuously differentiable in $(\eps, \xi)$. In addition, the $L^2$-spectrum of the linearisation of \eqref{e:rds} in this rescaled family consists of $N$ eigenvalues of order $\eps$ and the remaining spectrum is stable and uniformly bounded away from the imaginary axis.
\end{Hypothesis}

%The smoothness assumptions refer to a sufficiently uniform topology that allows for an expansion in $\eps$ and matching arguments as used in the previous section.
%
%The assumption on the essential spectrum takes care of the anticipated technical difficulties that otherwise occur. See \cite{DKP}. In fact, the  examples in \S\ref{s:bex-pulse} and \S\ref{s:apex} below violate this, but since the essential spectrum is invisible on the formal level of our analysis, we still use these convenient and instructive examples for illustration.
%Essential spectrum at the origin in this limit occurs in some models. See \S\ref{s:ex}. However, it is a thus far unresolved obstacle for the rigorous treatment. 
This paper is not concerned with sufficient conditions under which Hypothesis~\ref{h:main} holds. See \cite{DKP, HDKP} for such results for certain models. Rather, we %take an applied perspective and 
focus on the implications it has in allowing for asymptotic expansions. One may think of a center-manifold near the singular limit $\eps\searrow0$. %Note, however, that Hypothesis~\ref{h:main} naturally holds along center manifolds of a family in $\eps$ of stable equilibrium patterns.
%As mentioned, the majority of literature on semi-strong interaction of interfaces is based on formal asymptotic expansions, thus assuming hypotheses as this.

%Next, we define semi-strong interaction, which requires interface locations for $\eps>0$. 

Here and in the following we write $x\pm$ for the left and right limits $\lim_{\delta\searrow 0} x\pm\delta$.  Let $\bar{V}$ denote the first component of a vector $V$.

\begin{Definition}\label{d:semi} For solutions from Hypothesis~\ref{h:main}:
\begin{enumerate}
\item An interface is called a pulse if $V_0(r_{0,j}\pm)=V^\base$ and a front if\\ $|V_0(r_{0,j}+)-V_0(r_{0,j}-)|=|V^\base-V^\front|$.
\item For $\eps\in[0,\eps_0)$, $t\in[0,T]$ let $r_j(t)=r_{\eps,j}(t)$ be $C^1$ in $\eps$ (see remark below) such that for a pulse $\partial_x \bar{V}_\eps(r_j(t))=0$ and a front $|\bar{V}_\eps(r_j(t))|=\frac 1 2 |\bar{V}^\front-\bar{V}^\base|$. \label{i:loc}
\item With $r_j$ defined as in item \ref{i:loc}, interfaces are said to interact $k$-th order semi-strongly if there is non-trivial $R=(R_j)_{j=1}^N:\R^N\to \R^N$ such that\\ $\frac{\rmd}{\rmd t}r_j = \eps^k R_j(r_1,\ldots, r_N) + o(\eps^{k})$. 
\end{enumerate}
\end{Definition}

\medskip
%Starting from Hypothesis~\ref{h:main}, the strategy is to identify requirements on the form of $F$ and $G$ so that interfaces can interact first or second order semi-strongly. Notably, solutions are assumed to remain bounded as $\eps\to 0$. This circumvents searching for a scaling of $U$, $V$ and parameters that achieves boundedness and (non-trivial) semi-strong interaction, which is the standard ad hoc strategy when given an explicit rational form of $F$ and $G$ as in \S\ref{s:bex}. %See \cite{DoeKap, EhRaWo, MurOsi2}. 

\begin{Remark}
Concerning item \ref{i:loc} of Definition~\ref{d:semi}. Hypothesis~\ref{h:main} allows for smoothness in $\eps$, because the small scale solutions $v_\eps(\xi) = V_\eps(\eps\xi+r_{0,j})$ are smooth in $(\eps,\xi)$ at $\eps=0$. Hence, each local extremum of the interface profile $v_0$ perturbs for $0<\eps\ll 1$ to a unique curve of local extrema of $v_\eps$ that is smooth in $\eps$; analogously for the `mean value' of fronts.

Other choices for the definition of the interface locations for $\eps>0$ correspond to coordinate changes for the vector field $R$. 
\end{Remark}

We refer to $V^\base$ as the background state and, to ease notation, assume without loss of generality (by shifting $V$) that $V^\base=0$. %See Figure~\ref{f:profiles}.

\subsection{Standard forms} 

The discussion of the examples in \S\ref{s:bex} highlights that $F,G$ in \eqref{e:rds} for bounded quasi-stationary interface patterns naturally possess a singular expansion in $\eps$. Assuming integer exponents larger than or equal $-2$ and combining this with Taylor expansions in $U,V$, we derive in \S\ref{s:standard} that Hypothesis~\ref{h:main} yields
\begin{equation}\label{e:stand}
\begin{array}{*{1}{rcl}}
  \partial_t U &=& D_u \partial_{xx}U + H(U,V; \eps) + \eps^{-1}\left(F^\rms(U,V) + \eps^{-1}F^\rmf(U,V)U\right)V\\
  \partial_t V &=& \eps^2 D_v \partial_{xx}V + \eps E(U,V; \eps) + \left(G^\rms(U,V) + \eps^{-1}G^\rmf(U,V)U\right)V,
\end{array}
\end{equation}
where $E, H$ are smooth at $\eps=0$ and for fronts $F^\rmf, G^\rmf, F^\rms, G^\rms$ vanish at $V^\front$. The superscripts `s' and `f' indicate terms that are relevant for second and first order interaction, respectively, as discussed below. Note that \eqref{e:stand} is \emph{not} sufficient for existence of quasi-stationary patterns.
%:
%\begin{align*}
%F^\rms(U,V) &= \tF^\rms(U,V)[V-V^\front], & G^\rms(U,V) &= \tG^\rms(U,V)[V-V^\front],\\ 
%F^\rmf(U,V) &= \tF^\rmf(U,V)[V-V^\front], & G^\rmf(U,V) &= \tG^\rmf(U,V)[V-V^\front].
%\end{align*}

Additional conditions are (1) for $V^*\in\{0,V^\front\}$, if $G^\rmf(U,V^*)=G^\rms(U,V^*)= 0$ then also $E(U,V^*;0)=0$; (2)  $G^\rmf$ and $F^\rmf$ vanish at $V=0$ or else $V_1\equiv 0$. 

\begin{Remark}\label{r:sing}
Singular terms of order $\eps^{-j}$, $j>1$ on the right hand side of \eqref{e:stand}$_1$ are consistent only with first order semi-strong interaction, and these must possess factors $U^{j-1}$ and $V^j$ (or else $V$ must be order $\eps^j$). In  \eqref{e:stand}$_2$ the same holds when incrementing the orders of the factors by one.
Here we only consider $j\leq 2$ as this covers the concrete models of \S\ref{s:bex} and \S\ref{s:apex}.
\end{Remark}

%\medskip
%The form \eqref{e:stand} follows from Hypothesis~\ref{h:main} and  $\partial_t V_\eps = \calO(\eps^2)$, which holds, e.g., for $E\equiv 0=\calO(\eps^2)$ or $F^\rmf, G^\rmf, F^\rms, G^\rms$ with a factor $(V-V_*)^2$, $V_*=V^\base,V^\front$. This is the case in the models considered in the literature. 

\medskip
\emph{First order} interaction typically requires $G^\rmf\neq 0$. Specifically, symmetric pulses and $F^\rmf\neq 0$, $G^\rmf\equiv 0$ leads to trivial interaction. See \S\ref{s:first-small}. Note that the standard form covers the example \eqref{e:linsemi} with $F^\rmf=0$ and various other models as discussed in \S\ref{s:apex}. First order interaction for fronts has not been found in a specific equation to the authors knowledge, but arises naturally from the analogy to second order front interaction. %While the derivation suggests that first order interaction may be possible for $F^\rmf\equiv G^\rmf \equiv 0$, no specific equation is known which supports this. Indeed, $m=1$ requires $G^\rmf\neq 0$ for first order semi-strong interaction. See \S\ref{s:standard}. 

\medskip
\emph{Second order} interaction for symmetric pulses requires $F^\rms\neq 0$, and otherwise nontrivial interaction is driven by $E$ as in the example \eqref{e:ex-front}. We refer to \eqref{e:stand} with $F^\rmf=G^\rmf=0$ as the standard form for second order semi-strong interaction. Indeed, all models of the type \eqref{e:rds}, where second order semi-strong interaction has been found, have this form. Compare \eqref{e:ex-front}, \eqref{e:linsemi2} and \S\ref{s:apex} below. In particular, setting $F^\rmf=G^\rmf=0$ covers the `normal form' for semi-strong pulse interaction in two-component models proposed in \cite{DoeKap}. This has $n=m=1$, $\eps\to \eps^2$ and
\[
H=-\eps\mu U, \; G^\rms= g(U)V-1, \; F^\rms = f(U)V, \; E\equiv 0.
\]
Under conditions on $f,g$ this form is in fact sufficient for the existence of pulses as in Hypothesis~\ref{h:main}. See \cite{DoeKap} and note that the Gierer-Meinhardt model in \cite{DKP,KoWaWe} also has second order standard form. So does the three component model studied in \cite{DoeHeKap,HeDoeKap1,HDKP}, which has fronts, and where $U=(w_1,w_2)\in\R^2$, $V\in\R$, $V^\front=2$ (in the reference $\tilde V=V-1$) and 
\[
F^\rms \equiv 0, \; E= -(\alpha w_1 + \beta w_2 + \gamma), \; G^\rms = V-1\,, \; H = ( (V-w_1)/\tau, (V-w_2)/\theta).
\]
%(and as then expected, it does not support pulse-splitting). 

%%%%%%%%%%%%%%%
\subsection{Laws of motion} 

We summarize the results on the laws of motion for interface interaction derived in \S\ref{s:standard} and \S\ref{s:gen-se-we}. %To ease notation we do not explicitly write the factors $V-V^\front$ in case of fronts. 
We immediately note that, due to spatial translation symmetry, single interfaces on periodic $D$ or $D=\R$ have constant leading order velocity, and reflection symmetric pulses are stationary due to reflection symmetry of \eqref{e:rds} in $x$.

\medskip
%Consider $N$ interfaces at $r_1 < \ldots < r_N$ and boundaries at $r_0 < r_1$, $r_{N+1}> r_N$ with $r_0, r_{N+1}\in\R\cup\{\pm\infty\}$. 
On the `large' $x$-scale, for $x\in(r_j,r_{j+1})$, we write solutions $U_\eps, V_\eps$ from Hypothesis~\ref{h:main} as  $U_\eps = U_j=U_{0,j}+\eps U_{1,j} + \calO(\eps^2)$, $V_\eps = V_j= V_{0,j} + \eps V_{1,j} + \calO(\eps^2)$ for $j=0,\ldots, N+1$ with boundaries at $r_0, r_{N+1}$, if present. On the `small' spatial scale $\xi_j= (x-r_j)/\eps$ we use small letters $u,v$ and omit the index $j$ on $\xi$ in the following.

%The leading order equation for $U_\eps$ between interfaces is
%\begin{equation}\label{e:largestand}
%D_u \partial_{xx}U_{0,j} + H(U_{0,j},V_{0,j};0) - F_*(U_{0,j})E(U_{0,j},V_{0,j};0) = 0,
%\end{equation}
%where $V_{0,j}\in\{0,V^\front\}$. The term $F_*$ depends on $G^\rms, F^\rms$ in the second order case, and on $G^\rmf, F^\rmf$ in the first order case, and vanishes if either of these does, respectively.

\subsubsection{Second order semi-strong interaction}\label{s:mainsnd}
In this case $F^\rmf=G^\rmf=0$ and the interface at $x=r_j$ is a heteroclinic or homoclinic orbit $v_{0,j}$ of
\begin{equation}\label{e:sndinter}
  D_v\partial_{\xi\xi}v_{0,j} + G^\rms(a_j,v_{0,j})v_{0,j} =0,
\end{equation}
where $a_j:=u_{0,j}$ is constant in $\xi$ and thus acts as a parameter. It turns out that the next order in $u$ is given by
\begin{equation}\label{e:sewe-sol-u}
u_{1,j}(\xi) = u_{1,j}(0) + \partial_\xi u_{1,j}(0)\xi - \int_0^{\xi} \int_0^\zeta D_u^{-1} F^\rms(a_j,v_{0,j}(\eta))v_{0,j}(\eta) \rmd\eta \rmd\zeta.
\end{equation}
%Prop: and for $t=\eps^2 \tau$ the equations of motion read to leading order
\begin{Proposition}\label{p:sndlaw}
Assume Hypothesis~\ref{h:main} for \eqref{e:stand} with $G^\rmf=F^\rmf\equiv 0$. Then on the time scale $\tau = \eps^{-2} t$ it holds for each $j=1,\ldots,N$ that
\begin{equation}\label{e:slowmot1a}
\frac{\rmd}{\rmd\tau} r_{0,j} = - \langle \partial_u G^\rms(a_j,v_{0,j};0) [u_{1,j}, v_{0,j}] + E(a_j,v_{0,j};0), w_j \rangle / \langle \partial_\xi v_{0,j}, w_j \rangle,
\end{equation}
%Prop: Here
where $u_{1,j}$ is given by \eqref{e:sewe-sol-u} and $w_j$ spans the kernel of the $L^2$-adjoint of 
\[
\calL_j= D_v \partial_{\xi\xi} + G^\rms(a_j,v_{0,j})+ \partial_v G^\rms(a_j,v_{0,j})v_{0,j}.
\] 
\end{Proposition}
The proof is given in \S\ref{s:gen-se-we}.
%Prop
Note that $\ker \calL_j$ is spanned by the translation mode $\partial_\xi v_{0,j}$.
In practice, if all interfaces are pulses then $v_{0,j} = v_{0,1}$, and if all are fronts, then $v_{0,2j+1} = v_{0,1}$, $v_{0,2j} = v_{0,2}$. 

At regular points of the matching problem between $x$- and $\xi$-scales, the right hand side of \eqref{e:slowmot1a} yields a vector field and generalizes the equations of motion reported in the literature. 
%For instance, matching requires $U_{0,j}(r_j)=u_{0,j}(0)$ because $u_{0,j}$ is constant in $\xi$. Since $U_{0,j}$ is a solution to a second order ODE, it is determined completely by two constraints such as $U_{0,j}(r_j)=u_{0,j}(0)$ and one of $\partial_x U_{0,j}(r_j\pm) = \partial_\xi u_{1,j}(\infty)$. %Note that boundary conditions need to be taken into account for $j=0, N$.
%
Together with matching the $V$-components, the leading order problem at one interface typically involves all others so that $R_j$ in Definition~\ref{d:semi} depends on all $r_k$ for $j,k=1,\ldots, N$.
%which means the motion of each interface depends on all others to leading order. %Matching with the neighboring large scale solutions typically yields a fully coupled system of algebraic equations. See \cite{EhRaWo, EhWo}.
%\red Mention the matching equations?\black

\medskip
Let us briefly consider the time scale of motion. Similar to the example discussed in \S\ref{s:bex-pulse}, for $c= 0$ the $(v_0,q_0)$-equations are reversibly symmetric (by reflection symmetry $x\to -x$ of \eqref{e:rds}) \emph{also} when including the matching conditions. Hence, \emph{pulses} (homoclinic orbits) that are reflection symmetric about the $v_0$-subspace are generically robust (codimension zero), and persistent under perturbations of $u_0(0)=U(r_j)$. See \cite{Devaney}. Thus, $c=\calO(\eps)$ so that motion is order (at least) $\eps^2$ on the $x$-scale. See \S\ref{s:gen-se-we}. Moreover, for $m=1$ the equations are in addition Hamiltonian so that $c=\calO(\eps)$ for \emph{any} homoclinic solution. 

Concerning \emph{fronts}, for $m=1$ \eqref{e:sndinter} heteroclinic solutions whose asymptotic states have different energy (as in \cite{Ike1,Ike2,Mill-ex,Mill-stab}) cannot be stationary. However, two equilibria having the same energy is not structurally stable, so that locking motion at order $\eps^2$ for fronts requires additional structure in $G^\rms$, such as symmetry in $u$ and $v$. A trivial case is when $G^\rms(u,v)$ is independent of $u$ as in \S\ref{s:bex-front} and \cite{DoeHeKap,HeDoeKap1,HDKP}. 

\medskip
Generally, self-adjoint $\calL$ and symmetry of the interfaces allow to simplify the equations of motion and distinguish fronts and pulses. For instance, for even pulses the term in $E$ vanishes, while it essentially drives the motion of odd fronts in case of the aforementioned symmetry. See \S\ref{s:simpsym}. %Recently these equations have been rigorously justified for certain cases of even pulses in \cite{DKP}, and odd fronts in \cite{HDKP} essentially using the renormalisation group method of \cite{Promislow}.
It is one of the strengths of the model independent approach that it provides a common framework, linking the results for different models and patterns from the literature. 

\subsubsection{First order semi-strong interaction} In this case the interface at $r_j$ and its leading order velocity $c_j$ are determined simultaneously by the boundary value problem
\begin{equation}\label{e:sest-gen}
\begin{array}{*{1}{rcl}}
  D_u \partial_{\xi\xi} u_{1,j}  &=& - F^\rms(\ust_j,v_{0,j})v_{0,j} - F^\rmf(\ust_j,v_{0,j})[u_{1,j}, v_{0,j}]\\
  D_v \partial_{\xi\xi} v_{0,j} &=& - c_j \partial_\xi v_{0,j} - G^\rms(\ust_j,v_{0,j})v_{0,j} - G^\rmf(\ust_j,v_{0,j})[u_{1,j},v_{0,j}]\\
  \displaystyle\lim_{\xi\to\pm\infty} v_{0,j}(\xi) &=& 0\\
  \displaystyle\lim_{\xi\to\pm\infty}\partial_\xi u_{1,j}(\xi) &=& \partial_\xi U_0(r_j\pm),
\end{array}
\end{equation}
where $u_{0,j}\equiv\ust_j$ and this typically vanishes; else it must be another common root of $F^\rmf, G^\rmf$ and thus lies in a discrete set.
Note that the case $F^\rmf\neq 0, G^\rmf=0$ either generates uncoupled interface motion or velocities as in the second order case that are determined by a higher order expansion, which we omit. Note that %$G^\rmf\equiv 0$ implies $v_{0,j}$ decouples from $u_{1,j}$, and 
$F^\rmf = G^\rmf \equiv 0$, $c_j=0$ in \eqref{e:sest-gen} gives \eqref{e:sndinter}, that is, second order interaction. Recall that for fronts the nonlinearities have roots in $V$  at $V^\front$. 

\medskip
An interface is a solution of \eqref{e:sest-gen} that is homoclinic (pulse) or heteroclinic (front) to the invariant manifolds at $v_0\in\{0,V^\front\}$, which consist of affine $u_1$. Compare \eqref{e:linsemi-small3}. To leading order, matching involves only the nearest left and right neighbors (and $\ust_j$ is typically independent of $j$), so that  $R_j(r_1,\ldots,r_N) = R_j(r_{j-1},r_j,r_{j+1})$. % For $m=1$, non-trivial pulse motion requires $G^\rmf\neq 0$. 
%
%Matching of $x$- and $\xi$-scales requires $P_j^\pm = \lim_{x\to r_j\pm} \partial_x U_{j,0}(x)$ and $U_{j,0}(r_j) = u_{j,0}(\xi) \equiv \ust_j$. 
It is useful to view \eqref{e:sest-gen} as an a priori description of the velocities $c_j = C(P_j^-, P_j^+,\ust_j)$ with $P_j^\pm =  \partial_\xi U_0(r_j\pm)$ as parameters. See \S\ref{s:corenum} for numerical computations of an example case. In terms of $C$ and for $\tau= \eps t$ the leading order equations of motion for first order interaction are
\begin{equation}
\begin{array}{rcl}\label{e:sestr-dyn-gen}
D_u \partial_{xx}U_0 &=& - \H(U_0)\;,\quad x\in \cup_{j=0,{N}}(r_j,r_{j+1})\\
U_0(r_j) &=& \ust_j \;,\quad j=1,\ldots,N\\
\frac{\rmd}{\rmd\tau} r_j &=& C\left( \partial_\xi U_0(r_j-),  \partial_-\xi U_0(r_j+), \ust_j \right), 
\end{array}
\end{equation}
where $\H(U_0) = H(U_0,0;0) - F_*^\rmf(U_0,0)$ for certain $F_*^\rmf$ (see \S\ref{s:standard}), and boundary conditions apply at $r_0, r_{N+1}$. 

\bigskip
A benefit of model independent equations of motion is that for specific equations, where numerical observation suggest semi-strong interaction, the laws of motion can be readily computed and compared with the simulations. Moreover, general properties can be identified a priori. Assuming solvability of the first order equations of motion, $n=1$ and certain properties of $J$ and $C$, we prove in \S\ref{s:fast} that first order pulse motion possesses various Lyapunov functionals: in particular, the largest amplitudes of $U$ and distances between pulses decay. This severely constrains the possible leading order dynamics. For instance, periodic interface motion is not possible. %Partly based on numerical evidence, the assumptions hold for the stable pulse patterns in all models that are known to support first order semi-strong interaction. 

%\bigskip
%The results effectively classify semi-strong interaction into first and second order, and provide fundamental insight into the laws of motion of interfaces. The standard forms allow to determine a priori the kind of interaction that is to be expected for a specific model, and the general equations of motion can be applied to study and predict the dynamics. 

%%%%%%%%%%%
\section{Standard forms for semi-strong interaction}\label{s:standard}
In this section we derive \eqref{e:stand} and the mentioned properties from \eqref{e:rds} under Hypothesis~\ref{h:main}. In particular, we derive the leading order small and large scale equations. As a first step, we note that non-constant $U_0$ and constant $V_0$ away from interfaces implies that $G^\rms, G^\rmf$ in \eqref{e:stand}$_2$ must have a factor $V-V^*$ for $V^*=V^\base=0$ in case of pulse and $V^*\in\{0, V^\front\}$ in case of fronts. 

This a priori rules out semi-strong interaction for equations with $E=0$ and $G$ linear in $U$, such as the aforementioned FitzHugh-Nagumo model, where $G(U,V) = \rho (U-\gamma V)$ (see \cite{FHN}), and the Oregonator model, where $U=(w_1,w_2)$ and $G(U,V) = w_1-V$ (see \cite{oreg}). 

\medskip
In the following we consider solutions $U_\eps, V_\eps$ from Hypothesis~\ref{h:main} and omit the subindex $\eps$ for brevity. Hypothesis~\ref{h:main} implies that, away from interfaces, 
%
%It turns out that the standard forms derive from scaling requirements at a single interface $r_j$. Since this is stationary at $\eps=0$ by assumption, we change to the space variable to $\tilde x_j=x-\eps c_jt$ which captures motion with velocity $\eps c_j$.  For notational simplicity we omit the subindex $j$ and for now focus on pulses. 
%
%Equation \eqref{e:rds} then becomes
%
%\begin{equation}\label{e:rds2}
%\begin{array}{*{1}{rcl}}
%\partial_t U - \eps c \partial_{\tilde x} U &=& D_u \partial_{\tilde x \tilde x}U + F(U,V;\eps)\\
%\partial_t V - \eps c \partial_{\tilde x} V &=& \eps^2 D_v \partial_{\tilde x \tilde x}V + \tilde G(U,V;\eps)V + \eps E(U,V; \eps),
%\end{array}
%\end{equation}
%and tildes are dropped in the following. 
%
%We are concerned with the leading order evolution in which $\partial_t U= \partial_t V=0$ and thus seek solutions for vanishing left hand side in \eqref{e:rds2}, that is
%
\begin{equation}\label{e:rds2}
\begin{array}{*{1}{rcl}}
D_u \partial_{xx}U + F(U,V;\eps) &=& \calO(\eps)\\
\eps^2 D_v \partial_{xx} V +G(U,V;\eps)V + \eps E(U,V; \eps) &=& \calO(\eps),
\end{array}
\end{equation}
where the right hand side contains the time derivatives. %This can be viewed as a singular perturbation problem in $\xi$ at $\eps= 0$, where the large and small scale are the `slow' $x$- and `fast' $\xi$-scale, respectively. 
As before, we write $U,V$ on the $x$-scale and $u,v$ on the $\xi$-scale. We repeatedly make use of the smoothness assumption in Hypothesis~\ref{h:main} without explicit mentioning.

%\paragraph{Large scale problem} This is set on the $x$-scale and is given by \eqref{e:twode} at $\eps=0$ between interfaces where to leading order $V=V^\base=0$ and we write $V^\base$ for emphasis. Expanding $U=U_0 + \calO(\eps)$, $V = V^\base + \calO(\eps)$ we obtain (for now, see \S\ref{s:second} and \S\ref{s:first}) that
%
%\begin{equation}\label{e:large}
%D_u \partial_{xx}U_0 + F(U_0,V^\base;0) = 0,
%\end{equation}
%
%which would determine the shape of the $U$-components between interfaces.  

%\paragraph{Small scale problem}
\medskip
By Hypothesis~\ref{h:main}, we can write $r_j(t) = r_j(\tau,\eps)$ with $\tau=\eps t$. Setting $c_j = \frac{\rmd}{\rmd\tau} r_j(0,0)$, this means $\frac{\rmd}{\rmd t}|_{t=0} \xi_j = - c_j + \calO(\eps)$. We consider a single interface and drop the subindex $j$ in the following. Substituting the small scale ansatz \eqref{e:rds} and absorbing $\eps E$ into $\calO(\eps)$ gives
\begin{equation}\label{e:smallscale}
\begin{array}{*{1}{rcl}}
D_u \partial_{\xi\xi}u + \eps^2 c \partial_\xi u + \eps^2 F(u,v;\eps) &=& \calO(\eps^3)\\
D_v \partial_{\xi\xi} v + c \partial_\xi v + G(u,v;\eps)v &=& \calO(\eps).
\end{array}
\end{equation}
Pulses are solutions of these equations in the limit $\eps=0$ whose $v$-components are bi-asymptotic to $\{v=0\}$, and fronts heteroclinic connections between $\{v=0\}$ and $\{v=V^\front\}$,

In preparation for a subsequent matching of slopes on large and small scales, as in \eqref {e:fstslopes}, we write \eqref{e:smallscale} as a first order system in the form
\begin{equation}\label{e:smallscale-fst}
\begin{array}{*{1}{rcl}}
\partial_\xi u &=& \eps p\\
D_u \partial_\xi p &=& - \eps c p - \eps F(u,v;\eps) + \calO(\eps^2)\\
\partial_\xi v &=& q\\
D_v q_\xi &=& - c q - G(u,v;\eps)v + \calO(\eps).
\end{array}
\end{equation}
Hypothesis~\ref{h:main} yields an expansion $u= u_0 + \calO(\eps)$, $p= p_0 + \calO(\eps)$, $v= v_0 + \calO(\eps)$, $q= q_0 + \calO(\eps)$, $u=u_0+\calO(\eps)$. Setting $\eps=0$ in \eqref{e:smallscale-fst} gives constant $u_0, p_0$, which means that the amplitudes and slopes, respectively, of $U_0$ to the left and right of an interface are equal to leading order. Since \eqref{e:rds2}$_1$ yields a second order ordinary differential equation for $U_0$, it follows that a \emph{pulse} interface does not affect $U_0$. It can thus be any fixed solution to \eqref{e:rds2}$_1$ at $\eps=0$ that satisfies the boundary conditions. Hence, as in \S\ref{s:bex-pulse}, pulses would move to leading order in a fixed external field given by the large scale solution. However, at a \emph{front} interface the value of $V$ in \eqref{e:rds2} jumps between $V^\base$ and $V^\front$ and it thus affects the large scale solution as in \S\ref{s:bex-front}.

\medskip
The decoupling for pulses disappears for $F$ of the form $F(u,v;\eps) = \eps^{-1}\tF(u,v;\eps)$. It turns out that the exact nature of the singularities of $F$ and $G$ is essential for the type of semi-strong interaction. Recall that here we assume boundedness of $U,V$ and arrive at the conclusion that the right hand side of \eqref{e:rds} must have a singular term. In contrast, in \S\ref{s:bex-pulse}  we scaled $v$ a posteriori to obtain boundedness and thereby introduced a singularity in the originally regular right hand side.

\medskip
For clarity of the exposition, we now focus on \emph{pulses} so that $V_0\equiv 0$. For fronts all requirements at $V=V^\base$ equally apply to $V=V^\front$ and are explicitly noted in \S\ref{s:main}.

%-----------------
\subsection{Second order semi-strong standard form}\label{s:second}

Let us first consider the lowest order singularity 
\[
F(U,V;\eps) = H(U,V;\eps) + \eps^{-1} F_1(U,V),
\]
and regular $G$ so that $G(U,V;\eps) = G^\rms(U,V)$ without loss of generality, by modifying $E$, if required.

In order to solve on the large scale, we need to evaluate $F$ at $\eps=0$. This is possible if $F_1(U,V) =  F^\rms(U,V)V$, because by Hypothesis~\ref{h:main} we have\footnote{For fronts $V-V^\front = \calO(\eps)$ and $F^1$ has a factor $V-V^\front$. We omit similar comments below.} $V=\calO(\eps)$ for $x\neq r_j$. Specifically, $V_1$ enters into the equation for $U_0$ given by
\begin{equation}\label{e:U0}
\partial_t U_0 = H(U_0,V_0;0) + F^\rms(U_0,V_0)V_1,
\end{equation}
so that the assumption  $\partial_t U_0=\partial_t V_0 \equiv 0$ implies $\partial_t V_1\equiv 0$ or $F^\rms(U,0)\equiv 0$. In the non-trivial former case the equation for $V_1$ reads
\begin{equation}\label{e:V1lg2nd}
0 = G^\rms(U_0,V_0)V_1 + E(U_0,V_0;0),
\end{equation}
and solvability guaranteed by Hypothesis~\ref{h:main} implies that $E$ vanishes if $G^\rms$ does. Substitution of the solution into \eqref{e:U0} gives
\begin{equation}\label{e:slowlarge}
D_u \partial_{xx}U_0 + H(U_0,0;0) - F_*^\rms(U_0,V_0) E(U_0,0;0) = 0,
\end{equation}
with suitable $F_*^\rms(U_0,V_0)$. % = -F^\rms(U_0,V_0)(G^\rms(U_0,V_0))^{-1}E(U_0,V_0;0)$ i% = F^\rms_\front(U,V)[V_*,(G_\front(U,V;0)(v-V)^{-1} E(U,V;0)]
%Note that here the assumption $\partial_t V = \calO(\eps^2)$ away from interfaces is used (see Hypothesis~\ref{h:main}), and solvability requires $E(U_0,0;0)=0$ whenever $G^\rms(U_0,0)=0$. 
In fact, $F_*^\rms$ vanishes in all cases treated in the literature, where $E\equiv 0$ or $F^\rms\equiv 0$, or $F^1$ is quadratic in $V$. Compare \S\ref{s:apex}.

\medskip
The leading order small scale problem \eqref {e:smallscale-fst} now has the form
\begin{equation}\label{e:gen2}
\begin{array}{*{1}{rcl}}
  \partial_{\xi} u_0 &=& 0\\
  D_u \partial_\xi p_0 &=& - F^\rms(u_0,v_0) v_0\\
  \partial_{\xi}v_0 &=& q_0\\
  D_v \partial_\xi q_0 &=& - cq_0 - G^\rms(u_0,v_0)v_0.
\end{array}
\end{equation}

In order to obtain a complete leading order solution, the large and small scale solutions need to match appropriately at the interface for $x=r_j$. For pulses $V_0= V^\base=0$ on both sides of the interface and $u_{0}$ is constant so that continuity of $U_0$ and $V_0$ requires $u_{0}(\xi) \equiv U_0(r_j)$ and $v_{0}(\xi)\to 0$ as $\xi \to\pm\infty$. For the derivatives of $U$ and $u$, matching means to leading order (note $v_0=v_{0,j}, p_0=p_{0,j}$)
\begin{equation}\label{e:largeslope}
 \partial_x U_0(r_j\pm)
= p_0(\pm\infty) =  p_{0}(0) -\int_0^{\pm\infty} D_u^{-1} F^\rms(U_0(r_j),v_{0}(\xi)) v_{0}(\xi) \rmd\xi.
\end{equation}

Pulse interfaces thus correspond to solutions of \eqref{e:gen2} that are homoclinic to $V^\base=0$ in the $(v_0,q_0)$-equations of \eqref{e:gen2}. %For $m\geq 2$ it seems possible that asymmetric pulses occur for which $c\neq 0$ so that intrinsic interface motion would in fact be first order. It would be interesting to find examples where semi-strong interaction in this context indeed occurs.
The situation for fronts is analogous. See \S\ref{s:simpsym} for the effect of symmetries in the laws of motion.

\medskip 
Concerning Remark~\ref{r:sing} note that while the large scale problem allows for general $\eps^{-j}$-terms in the $U$-equation if $F$ has a factor $V^j$, the small scale problem structure necessarily changes for $j>1$ and leads to first order interaction as discussed next. 

%Another case for motion of this order occurs if $F=O(1)$ so that $p^\pm=p(0)$, but $U^+\neq U^-$. This naturally occurs if there is a heteroclinic solution to the small scale problem, in particular $G$ has another factor at $\eps=0$ so that $G(u,v;0)v = \tilde G(u,v)v(v-v_0)$.

%---------------------
\subsection{First order semi-strong standard form}\label{s:first}
Let us now consider the case of other singularities in $F, G$. From the above discussion, first order semi-strong interaction arises if the reversible symmetry is broken so that $c=\calO(1)$ is typically required to locate solutions. 

\subsubsection{Small scale problem}\label{s:first-small} The next order for singularities is $F=\calO(\eps^{-2})$ or $G=\calO(\eps^{-1})$. We start out by considering $F$ and suppose there is smooth $\tilde F(U,V;\eps)$ such that
\begin{equation}\label{e:Fform1}
F(U,V;\eps)= \eps^{-2}\tilde F(U,V;\eps).
\end{equation}
Expanding $\tilde F$ in $\eps$ and arguing as in the second order case this means that 
\begin{equation}\label{e:Fform2}
\tilde F(U,V;\eps) = F_2(U,V)V + \eps F^\rms(U,V)V + \eps^2 H(U,V;\eps).
\end{equation}
From \eqref{e:smallscale-fst}$_1$ it follows that $\partial_\xi u_0 =0$ so that $u_0\equiv\ust$ is constant. Now the expansion $u = u_0 + \eps u_1 + \calO(\eps^2)$ in \eqref{e:smallscale-fst} gives %, omitting terms of order $\calO(\eps^2)$,
\begin{equation}\label{e:fast1}
\begin{array}{*{1}{rcl}}
   \partial_\xi u_1 &=& p\\
  D_u \partial_\xi p &=&- F^\rms(u^\rmf + \eps u_1,v)v -
  \eps^{-1} F_2(u^\rmf + \eps u_1,v)v + \calO(\eps)\\
  \partial_{\xi}v &=& q\\
 D_v \partial_\xi q &=& - c q - G(u^\rmf + \eps u_1,v;\eps)v  + \calO(\eps).
\end{array}
\end{equation}
Since $v_0$ is non-constant in the interface, to compensate the singular coefficient of $F_2$ requires a root at $u_0\equiv\ust$ so that $F_2(u,v) = F^\rmf(u,v)(u-\ust)$. This implies that at \emph{any} interface location the value of $u_0$ must be at a root of $F^\rmf$. Without loss of generality, by shifting $u$, we can assume $\ust=0$ at least for one interface. 

\medskip
Setting $\eps=0$ in the equation for $q$ in \eqref{e:fast1} generates a right hand side that is independent of $u_1$ and $p_0$, so that the symmetry argument from the second order case applies, which means interaction with motion of order $\eps^2$ -- at least for symmetric pulses. Hence, generally we need to allow for a term of order $\eps^{-1}$ in $G$, and, as for $F$, this must have a factor $(u-\ust)$. In particular, $\ust$ must be a simultaneous root of $F^\rmf$ and $G^\rmf$.  Indeed, in the example \eqref{e:linsemi-small2} the unique root of $G$ and $F$ in $u$ lies at zero. Since terms of order $\eps$ can be absorbed into $E$, we obtain
\[
G(u,v;\eps) = G^\rms(u,v) + \eps^{-1} G^\rmf(u,v)(u-\ust).
\]
Substitution into \eqref{e:fast1} gives to leading order
\begin{equation}\label{e:fst}
\begin{array}{*{1}{rcl}}
  \partial_\xi u_1 &=& p_0\\
  D_u \partial_\xi p_0 &=& - F^\rms(\ust,v_0)v_0 - F^\rmf(\ust,v_0)[u_1, v_0]\\
  \partial_{\xi}v_0 &=& q_0\\
  D_v \partial_\xi q_0 &=& - c q_0 - G^\rms(\ust,v_0)v_0 - G^\rmf(\ust,v_0) [u_1,v_0],
\end{array}
\end{equation}
where only the root at $u_1 = \ust=0$ is explicitly noted.

This is the generalisation of \eqref{e:linsemi-small2} and analogously $\{v_0=\partial_\xi v_0 = 0\}$ is an invariant subspace which consists of affine $u_1(\xi) = \const_1\xi + \const_2$ for any $\const_1, \const_2 \in \R^n$. This space is also a center manifold of any equilibrium with $v_0 = 0$, and for $c=0$ its transverse eigenvalues are the square roots of those of the matrices $G^\rms(\ust,0)+G^\rmf(u_1,0)[u_1,\cdot]$.

Matching large and small scale requires $p_{0}(\pm\infty)=\partial_x U_0(r_j\pm)$, which gives the claimed laws of motion \eqref{e:sest-gen}. In contrast to the second order small scale problem \eqref{e:gen2}, here $\partial_\xi p_0$ depends on $u_1$ to leading order. In particular, asymmetry of $\partial_x U_0$ at $r_j$ implies asymmetric boundary conditions $p_{0}(-\infty) \neq -p_{0}(\infty)$, which break the reversible symmetry at $c=0$. Therefore, typically $c\neq 0$ is required to locate a solution that is homoclinic or heteroclinic in the $v_0$-component. This generates first order interaction.

\subsubsection{Large scale problem}\label{s:fst-lg}
Since $U_0\not\equiv 0$ and $V_0\equiv 0$ in Hypothesis~\ref{h:main}, regularity of \eqref{e:rds2}$_1$ with $F(U,V;\eps)$ as in \eqref{e:Fform1} and \eqref{e:Fform2} at $\eps=0$ implies $V_1\equiv 0$ or $F^\rmf$ has a factor $V$. We refer to the former as the linear case and latter as the quadratic case, which occurs in the examples in \S\ref{s:bex-pulse} and \S\ref{s:apex} below, where $V_1=V_1^-\not \equiv 0$.

On the large scale \eqref{e:rds2}$_2$ (multiplied by $\eps$) is of the form
 \begin{equation}\label{e:fst-lg-v}
\eps^2 E(U,V;\eps) +  \eps G^\rms(U,V)V + G^\rmf(U,V)[U,V] = \calO(\eps^2),
\end{equation}
so that analogously $V_1\equiv 0$ or $G^\rmf$ has a factor $V$. 

\paragraph{The linear case}
Recall that the right hand side of \eqref{e:fst-lg-v} is $\eps\partial_t V$, so that $V_0\equiv V_1\equiv0$ means it is $\calO(\eps^3)$. Therefore,  \eqref{e:fst-lg-v} at order $\eps^2$ yields an algebraic solvability condition (solvable by Hypothesis~\ref{h:main}) with $E=0$ whenever $G^\rmf=0$. Substituting the corresponding solution $V_2$ back to the $U$-equation, gives the leading order large scale problem 
\begin{equation}\label{e:fastlarge}
D_u \partial_{xx}U_0 + H(U_0,0;0) - F_*^\rmf(U_0,V_0) = 0,
\end{equation}
with suitable $F_*^\rmf$ analogous to \eqref{e:slowlarge}. The correction term $F_*^\rmf$ only depends on $F^\rmf$, $G^\rmf$ and $E$ at $V=0$, and vanishes if either of these does. 

\paragraph{The quadratic case} Here $\partial_t U_0\equiv 0$ in \eqref{e:rds2}$_1$ implies $\partial_t V_1\equiv 0$ so that, as in the linear case, the right hand side of \eqref{e:fst-lg-v} is $\calO(\eps^3)$. The arising algebraic solvability condition reads
\[
E(U_0,V_0;0) + G^\rms(U_0,V_0)V_1 + G^\rmf(U_0,V_0)[U_0,V_1^2]=0,
\] 
where $G^\rmf$ has been redefined according to the additional factor $V$ required above for $V_1\neq 0$. While $V_1=0$ is one solution for $E\equiv 0$, nontrivial $G^\rms(U_0,0)$ and $G^\rmf(U_0,0)$ generate another solution. (The assumed boundedness in $x$ may after all select $V_1=0$ as in \S\ref{s:apex} below.) Hence, in general, the correction $F^\rmf_*$ in \eqref{e:fastlarge}, also depends on $G^\rms$. Compare \S\ref{s:apex}.

%------------------------------
\subsection{Further comments}\label{s:disc}

%Such solutions are (typically) unbounded in the $u_1$-components with limiting slopes $\partial_\xi u_1(\pm\infty)\in\R^n$. These need to match the (typically non-zero) slopes of the adjacent large scale solutions $U^\pm$ on left (-) and right (+), respectively.
Since $U_{0,j}$ obeys a second order ODE for all $j$, and $U_{0,j}^\pm(r_j)=u_{0,j}=\ust_j$ lies at roots of $(F^\rmf, G^\rmf)$ it follows that the limiting slopes $\partial_x U^\pm(r_j\pm)$ completely determine the adjacent large scale solutions $U_{0,j}, U_{0,j+1}$. The matching problem is therefore local in the interface sequence to leading order, and the velocities $c_j$ can be viewed as a function of the left and right slopes (and roots $u^\rmf_j$ of $(F^\rmf, G^\rmf)$) alone:
$c_j=C(\partial_x U^-(r_j-), \partial_x U^+(r_j+), \ust_j)$.
The laws of motion for first order interface dynamics in terms of $C$ are given by \eqref{e:sestr-dyn-gen}.
However, even for the simplest cases, nothing is known analytically about $C \neq 0$. Compare \S\ref{s:apex}. Nevertheless, for $n=1$, the local coupling and $u_0\equiv u^\rmf$ have strong consequences for possible interface dynamics. See \S\ref{s:fast}.

%Using Hypothesis~\ref{h:main} and expanding $F$ and $G$ in $\eps$ and $(U,V)$ we have derived the necessary `standard' forms of $F,G$ for first and second order semi-strong interaction.

%Recall that for $m=1$ the small scale problem \eqref{e:gen2} is Hamiltonian. Thus, fronts whose asymptotic states have different energies require $c\neq0$, which implies first order interaction. Indeed, interface patterns can be expected to arise from the reflection symmetry. However, then adjacent fronts have counter-oriented velocities of the same magnitude, which is qualitatively trivial until strong interaction occurs.

As mentioned in \S\ref{s:main}, the `normal form' proposed in \cite{DoeKap} for pulses and the three component model in \cite{HeDoeKap1} with fronts have the second order standard form for semi-strong interaction, and indeed the relative motion has velocities of order $\eps^2$. See \cite{DoeKap, WaSuRu,EhWo, HDKP}. The present abstract derivation explains why for first order interaction a factor $UV$ (and not just $V$) is required in the nonlinear kinetics: there must be a simultaneous root in $U$ of the leading $U$- and $V$-kinetics. Moreover, the special role of the quadratic nonlinearities $UV^2$ in both $F$ and $G$ that arises in the `normal form' of \cite{DoeKap} also occurs on the present abstract level.

In case $E\equiv 0$, the corrections $F_*^\rmf$ and (typically) $F_*^\rms$ vanish and so the large scale vector field is the same for first and second order interaction (as in the examples from \S\ref{s:bex-pulse}). Thus it does not reveal the order of semi-strong interaction.

\medskip 
Concerning Remark~\ref{r:sing}, as in the second order case  the large scale problem allows for general $\eps^{-j}$-terms in the $U$-equation if $F$ has a factor $V^j$ (or if $V$ is order $\eps^j$). However, here the small scale problem is consistent with this, if $F$ also possesses a factor $U^{j-1}$. For the $V$-equation the same applies to $G$ with powers incremented by one.

\section{Application to examples}\label{s:apex}

In this section we apply the standard forms to immediately see how first and second order semi-strong pulse interaction arise in the class of models given by
\begin{equation}\label{e:ex}
\begin{array}{rl}
\partial_t U &= \partial_{xx}U + \alpha - \mu U + \gamma V - \rho UV^2\\
\partial_t V &= \eps^2 \partial_{xx}V  + \beta - V + UV^2,
\end{array}
\end{equation}
where $U, V$ are scalar and $\alpha,\beta,\gamma,\mu,\rho$ are parameters. This is a combination of the model from \S\ref{s:bex-pulse} ($\rho=\mu=\beta=0$), the Schnakenberg model ($\mu=\gamma=0$ \cite{KoWaWe2}), %,EhRaWo}), 
the Gray-Scott model ($\alpha=\mu$, $\gamma=\beta=0$ \cite{DoeKap,MurOsi2,Rey}) and the Brusselator model ($\alpha=\mu=0$ \cite{KoErWe}). 

In order to see which type of semi-strong interaction occurs, we substitute the scalings 
\[
U = \eps^r \tU, \quad V = \eps^s \tV
\]
into \eqref{e:ex}. Upon dividing by $\eps^r$ and $\eps^s$, respectively, this gives
\begin{equation}\label{e:ex-scaled}
\begin{array}{rl}
\partial_t \tU &= \partial_{xx}\tU + \eps^{-r}\alpha - \mu \tU + \gamma \eps^{s-r}\tV - \eps^{2s}\rho \tU\tV^2\\
\partial_t \tV &= \eps^2 \partial_{xx}\tV  + \eps^{-s}\beta - \tV + \eps^{s+r}\tU\tV^2.
\end{array}
\end{equation}

\subsection{Second order semi-strong interaction}
Comparing \eqref{e:ex-scaled} with \eqref{e:stand}, we infer $F^\rmf = G^\rmf=0$ requires $s-r \geq -1$ and $2s \geq -1$ with at least one equality, and $r+s=0$. Both cases yield $s=-r=-1/2$, which gives
\begin{equation}\label{e:ex-second}
\begin{array}{rl}
\partial_t \Uw &= \partial_{xx}\Uw + \eps^{-1/2}\alpha - \mu \Uw + \eps^{-1} \gamma\Vw - \eps^{-1}\rho \Uw\Vw^2\\
\partial_t \Vw &= \eps^2 \partial_{xx}\Vw  + \eps^{1/2}\beta - \Vw + \Uw\Vw^2.
\end{array}
\end{equation}
It follows that second order interaction requires $\alpha =\eps^{1/2}\alw$ with bounded $\alw$ and $\beta =\eps^{1/2}\bew$ with bounded $\bew$ for a regular expansion, and with $V_0=0$ in mind. The small scale problem \eqref{e:gen2} applied to this case reads 
\begin{equation}\label{e:smallex}
\begin{array}{rl}
\partial_\xi \uw_0 &= 0\\
\partial_\xi p_0 &= - \gamma\vw_0 + \rho \uw_0\vw_0^2\\
\partial_\xi \vw_0 &= q_0\\
\partial_\xi q_0 &= - cq_0 + \vw_0 - \uw_0\vw_0^2.
\end{array} 
\end{equation}
Note that the $(v,q)$-equations are the same as \eqref{e:linsemi-small1}; in particular only pulses exist.

On the other hand, expanding \eqref{e:ex-second} in $\eps$ gives $\Vw_0=0$ and $\Vw_1=\bew$ so that the large scale problem \eqref{e:slowlarge}, which is obtained by expanding \eqref{e:ex-second}$_1$ reads
\begin{align}\label{e:ex-large}
\partial_{xx}\Uw_0 &= -(\alw + \gamma\bew) + \mu \Uw_0,
\end{align}
generalizing \eqref{e:linsemi-lg}. 

%\medskip
%Concerning spectra, we linearise \eqref{e:ex-second} posed on the large scale in $\vw_0=\Vw_-=\eps^{1/2} \beta+ \calO(\eps\beta)$, $\uw_0=a$. NAJA, MUSS ERST FUEHRENDE ORDNUNG ETC ...With some abuse of notation we reuse $\uw_0, \vw_0$ for the linearised variables, and the linearisation reads 
%\[
%\begin{array}{rl}
%\eps^2 \partial_t \uw_0 &= \partial_{\xi\xi}\uw_0 + (\eps^2\mu-  \eps\vw_-^2) \uw_0 + \eps(\gamma-2a\upar\vw_-)\vw_0,\\
%\partial_t \vw_0 &= \partial_{\xi\xi} \vw_0 + (2a\vw_- -1) \vw_0 + \vw_-^2\uw_0.
%\end{array}
%\]
%To find its characteristic equation we substitute $(\uw_0, \vw_0) = \exp(\lambda t + \nu \xi)(\const_1, \const_2)$, $\lambda, \nu\in\mathbb{C}$, $\const_1,\const_2\in\R$ and take the determinant of the arising matrix. This yields 
%\[
%(\nu^2+\eps^2(\mu-\lambda) - \eps\vw_-^2)(\nu^2-\lambda-1+2a\vw_-) -\eps (\gamma-2a\upar\vw_-) \vw_-^2 = 0,
%\]
%and setting $\beta=\calO(\eps^2)$ implies $\vw_-=\calO(\eps^{5/2})$, which gives
%\[
%(\nu^2+\eps^2(\mu-\lambda))(\nu^2-\lambda-1+2a\vw_-) = \calO(\eps^{5/2}).
%\]
%Essential spectrum occurs at $\nu=\mathrm{i} k$, $k\in\R$ %(for instance in an $L^2$ setting) and is thus to leading order given by $\lambda = -1-k^2$ and $\lambda = -\mu -k^2/\eps^2$. Hence, we expect essential spectrum up to $-\mu$ to leading order. Indeed, for $\mu=o(1)$ the presence of essential spectrum near the origin poses a so far unresolved obstacle for the rigorous treatment, for instance in the second order regime of the Gray-Scott model, where $\mu = \sqrt{\eps}\alw$. See \cite{DKP}.

\subsection{First order semi-strong interaction}
The comparison of \eqref{e:ex-scaled} with \eqref{e:stand} shows that first order interaction requires $s-r \geq -2, 2s \geq -2$ with at least one equality and $s+r=-1$. Both cases yield $r=0$, $s=-1$ and thus
\begin{equation}\label{e:fst-apex}
\begin{array}{rl}
\partial_t U &= \partial_{xx}U + \alpha - \mu U + \eps^{-1}\gamma \Vs - \eps^{-2}\rho U\Vs^2\\
\partial_t \Vs &= \eps^2 \partial_{xx}\Vs + \eps\beta -\Vs + \eps^{-1} U\Vs^2.
\end{array}
\end{equation}
Since $F^\rmf$ has a factor $\Vs$ this is the quadratic case from \S\ref{s:fst-lg}. We thus expand \eqref{e:fst-apex}$_2$ in $\eps$ and solve for $\Vs_1$, which suggests the two solutions
$\Vs_1^\pm = \frac 1 {2U_0} \pm \sqrt{\frac 1 {4U_0^2} - \frac{\beta}U_0}$.
However, the fact that $U_0=0$ at pulse locations implies that only $\Vs_1=\Vs_1^-$ is an option.
Substitution into the expansion of \eqref{e:fst-apex} yields, in contrast to \eqref{e:ex-large}, the large scale problems
\begin{align}\label{e:fst-lg}
\partial_{xx}U_0 &= -\alpha + \mu U_0 - \gamma \Vs_1^- - \rho U_0\left(\Vs^-_1\right)^2.
%&= -\alpha + \mu U_0 + \rho\left(\frac 1 2 - \sqrt{\frac 1 4 - \beta U_0}\right).
\end{align}

Now we turn to the small scale problem. Application of \eqref{e:fst} to \eqref{e:fst-apex} gives
\begin{equation}\label{e:core}
\begin{array}{rl}
\partial_{\xi\xi} u_1 &= \rho u_1\vs_0^2 - \gamma \vs_0\\
\partial_{\xi\xi} \vs_0 &= - c\partial_{\xi} \vs_0 + \vs_0 - u_1\vs_0^2,
\end{array}
\end{equation} 
In analogy with \eqref{e:linsemi-small3}, for any $c$ the invariant subspace $\{\vs_0=\partial_\xi \vs_0 =0\}$ of \eqref{e:core} consists of affine $u_1(\xi) = \const_1 \xi+ \const_2$, $\const_1,\const_2\in\R$. Compare also  \eqref{e:fst}. For the four dimensional flow of \eqref{e:core}, this space is the two-dimensional center manifold of any equilibrium with $\vs_0=0$ and has transverse eigenvalues $\left(-c\pm\sqrt{4+c^2}\right)/2$. Hence, it is normally hyperbolic with one-dimensional stable and one-dimensional unstable manifolds and the center manifold has the same properties as in \eqref{e:linsemi-small3}.

\medskip
For $\gamma=0, \rho=1$ \eqref{e:core} is the well-known `core problem,' first derived and numerically analysed in \cite{Rey}, see also \cite{MurOsi2}. In this case also $u_1=0$ is invariant and $u_1$ convex whenever $u_1>0$. For $c=0$, the existence of various even pulse solutions which are homoclinic in $\vs_0$ and asymptotically affine in $u_1$ was proven in \cite{DoeKapPel}. In \S\ref{s:corenum} we provide a more detailed numerical study, which also shows that asym\-metric pulse-type solutions require $c\neq 0$. Indeed, in contrast to \eqref{e:smallex} here matching requires $\lim_{\xi\to\pm\infty} u_1(\xi) = \partial_x U_0(r_j\pm)$, which breaks the reversible symmetry if $\partial_x U_0(r_j-) +  \partial_x U_0(r_j+) \neq 0$. Compare \S\ref{s:first-small}. Therefore, motion is expected to be of order $\eps$, though nothing is known rigorously.

\medskip
Note that here second order interaction is an asymptotic regime within first order interaction. Just as for the example in \S\ref{s:bex-pulse}, substituting the scaling for second order interaction into \eqref{e:fst-lg} and \eqref{e:core} yields \eqref{e:ex-large} and \eqref{e:smallex}, respectively.

%--------------------------------------------------
\section{Laws of motion for second order semi-strong interaction}\label{s:gen-se-we}

In this section we prove Proposition~\ref{p:sndlaw}, which gives the implicit form of second order interaction laws and simplify these further in case of symmetries.

\subsection{Proof of Proposition~\ref{p:sndlaw}} 
On the small scale, \eqref{e:stand} for $G^\rmf=F^\rmf=0$ reads
\begin{align}
 \eps^{2}\partial_t u &=D_u \partial_{\xi\xi}u +  \eps^{2}H(u,v;\eps) + \eps F^\rms(u,v)v \label{e:uslowsmall}\\
\partial_t v &= D_v \partial_{\xi\xi}v + G^\rms(u,v)v + \eps E(u,v;\eps). \label{e:vslowsmall}
\end{align} 
In the following, solutions from Hypothesis~\ref{h:main} are considered, and for readability we suppress dependencies of $u, v$ on $j$ and $\eps$. Hypothesis~\ref{h:main} allows to expand \eqref{e:uslowsmall}, \eqref{e:vslowsmall} in $\eps$ at $\eps=0$ so that that terms of equal order in this expansion must coincide.

\medskip 
Order $\eps^{0}$ in \eqref{e:uslowsmall} gives the condition $\partial_{\xi\xi}u_0 = 0$ as in \eqref{e:gen2}, which implies  $u_0 \equiv a_j\in\R$, and in \eqref{e:vslowsmall} we recover the $(v_0,q_0)$-equations from \eqref{e:gen2} for $c=0$ as in \eqref{e:sndinter}.  

At order $\eps$ equation \eqref{e:uslowsmall} yields 
\begin{equation}\label{e:ueps}
D_u \partial_{\xi\xi} u_1 + F^\rms(u_0(r_j),v_0)v_0 = 0,
\end{equation}
which gives \eqref{e:sewe-sol-u} and determines $u_1$ by $u_0,v_0$ up to affine terms. These are fixed by coupling to neighboring interfaces or the boundary conditions. 

\medskip
Towards order $\eps$ in \eqref{e:vslowsmall}, first linearize the right hand side of \eqref{e:vslowsmall} in an interface pattern from Hypothesis~\ref{h:main}. At order $\eps$, this yields the linear operator
\[
\calL :=  D_v \partial_{\xi\xi} + G^\rms(u_0,v_0)+ \partial_v G^\rms(u_0,v_0)v_0,
\]
and $\partial_\xi v_0 \in \ker(\calL)$ due to translation symmetry in $\xi\in\R$. The right hand side of \eqref{e:uslowsmall} at order $\eps$ is the left hand side of \eqref{e:ueps}, which is independent of $v_1$. Hence, the linearization of \eqref{e:uslowsmall}, \eqref{e:vslowsmall} is block-triangular at order $\eps$ and the $N$ $\calO(\eps)$  eigenvalues from Hypothesis~\ref{h:main} stems precisely from the  translations of interfaces. Substituting $\xi = (x-r_j)/\eps$, %and the solution for $u$ from \eqref{e:sewe-sol-u}, and expanding in $\eps$,
the left hand side of \eqref{e:vslowsmall} therefore gives
\begin{equation}\label{e:LHS-slow}
\frac{\rmd}{\rmd t}v(t,\xi) = - \eps \partial_\xi v_0 \, \frac{\rmd}{\rmd \tau}  r_{0,j} + \calO(\eps^2). %v_t is O(\eps^2)
\end{equation}
%(Note that we abused notation here setting $r_{\eps,j}$ on the time scale $\tau$.)

At an interface $(u_0,v_0)$ the right hand side of \eqref{e:vslowsmall} expands to 
\[
\eps\left(
\calL v_1 + \partial_u G^\rms(u_0,v_0) [u_1, v_0] + E(u_0,v_0;0)
\right) + \calO(\eps^2).
\]
Comparison with terms of order $\eps$ from \eqref{e:LHS-slow} implies
\[
-\calL v_1 = \partial_\xi v_0 \, \frac{\rmd}{\rmd\tau} r_{0,j}  + \partial_u G^\rms(u_0,v_0) [u_1, v_0] + E(u_0,v_0;0).
\]
Let $w\in \ker\calL^*$, with the $L^2$-adjoint $\calL^*$. Since $u_1$ is determined by \eqref{e:ueps}, we obtain the solvability condition
\[
\langle\partial_\xi v_0 \, \frac{\rmd}{\rmd\tau} r_{0,j}  + \partial_u G^\rms(u_0,v_0) [u_1, v_0] + 
E(u_0,v_0;0), w\rangle = 0,
\]
that determines the velocity by \eqref{e:slowmot1a} and completes the proof. 

(Note that also the conditions encountered in \S\ref{s:standard} on the large scale are recovered by comparing terms of equal order.)

%\begin{Remark}
%Determining the $N$-dimensional vector field for the second order semi strong interface motion requires to solve for $r$ by matching equations \eqref{e:sewe-sol-u} and solutions to \eqref{e:sndinter} with solutions to \eqref{e:slowlarge} for all $j=1,\ldots,N$ and the boundaries. We expect a globally coupled vector field, but there may be singularities as in \cite{EhWo}.
%\end{Remark}

%---------------------------------------------------
\subsection{Simplifications by symmetries}\label{s:simpsym}

\medskip
All analyses in the literature for second order semi-strong interaction concern models with $n=1$ so that $\calL$ is self-adjoint. Hence, a natural consideration for the present general case is to assume self-adjoint $\calL$, which means $w=\partial_\xi v_0$ can be chosen. In this case we can proceed as follows for symmetric interfaces.

\subsubsection{Pulse} If the interface is a symmetric pulse then $v_0$ is even (for appropriate $v_0(0)$) and $\partial_\xi v_0$ odd. In addition, $E(a_j,v_0;\eps)$ is even as a funtion of $\xi$, and using \eqref{e:sewe-sol-u} with $b_j := p_0(0)$, the function $p_0 - b_j$ is odd. Further, $u_1 = u_1(0) + \int_0^\xi p_0(\zeta) - b_j \rmd \zeta$ is even, while $\int_0^\xi b_j \rmd \zeta = b_j \xi$ is odd. Therefore, when including $j$-dependence for emphasis, \eqref{e:slowmot1a} simplifies to
\begin{equation}\label{e:slowmot3}
 \frac{\rmd}{\rmd \tau} r_j = -\langle \partial_u G^\rms(a_j,v_{0,j})[ 
  b_j\cdot, v_{0,j}], \partial_\xi v_{0,j} \rangle_2/\|\partial_\xi v_{0,j}\|_2^2.
\end{equation}
This is indeed a generalisation of the equations of motion for pulses in the Gray-Scott model \cite{WaSuRu} and the Gierer-Meinhardt model \cite{DoeKap, EhWo}.

The equilibrium $b_1=\ldots=b_N=0$ means that the pulse pattern is stationary to leading order if the average of the slopes $b_j = p_0(0) = (p_0(\infty)+p_0(-\infty))/2$ vanishes at all pulses, for instance at a symmetric configuration where all distances are equal.

\subsubsection{Front} If the interface is a front, then the equations of motion may simplify as follows if there is $v_*$ such that $v_{0}-v_*$ is odd. First, we shift $v$ so that $v_0$ is odd and $V^\base=-V^\front=-v_*$, and $\partial_\xi v_0$ is even. Then $F^\rms$ and $G^\rms$ have the factor $(v-v_*)(v+v^*)$, which is even in $\xi$ for $v=v_0$. If now $F^\rms$ is odd in $v$ then $F^\rms(a_j,v)$ is odd in $\xi$ and, by \eqref{e:gen2}$_2$, $p_0$ is even. With $\beta_j:= u_1(0)$ we then have that $u_1-\beta_j = \int_0^\xi p_0(\zeta)\rmd \zeta$ is odd. Including $j$-dependencies, \eqref{e:slowmot1a} simplifies to
\begin{equation}\label{e:slowmot4}
 \frac{\rmd}{\rmd \tau} r_j = - \langle 
 \partial_u G^\rms(a_j,v_{0,j})[ \beta_j, v_{0,j}] + 
 E(a_j,v_{0,j};0), \partial_\xi v_{0,j} \rangle_2/\|\partial_\xi v_{0,j}\|_2^2.
\end{equation}
%where the part of the scalar product involving $\partial_u G^\rms$ vanishes for instance if $\partial_u G^\rms$ is %even in $v$. 
As mentioned in \S\ref{s:main}, in \cite{HeDoeKap1} $F^\rms\equiv\partial_u G^\rms\equiv 0$ and the fronts are odd, and indeed the equations of motion correspond to \eqref{e:slowmot4}. 

\subsubsection{Further comments}
For $\partial_uG^\rms\equiv 0$ and $E(U,V;\eps) = e(U;\eps)V$ with scalar $e$, we have $\frac{\rmd}{\rmd \tau} r_j =0$ in \eqref{e:slowmot4} since $\langle v_{0,j},\partial_\xi v_{0,j}\rangle =0$. Hence, such $E$ do not drive second order semi-strong interaction of odd fronts. Clearly, $\partial_u G^\rms\not\equiv 0$ is required for non-trivial \eqref{e:slowmot3}.

\medskip
%The interface coupling and velocities essentially depend on $b_j$ (averages of slopes of $U_0$ at $r_j$) and $a_j$ (values of $U_0$ at $r_j$). However, the additional $\beta_j$-dependence in \eqref{e:slowmot4} and that $\partial_uG^\rms\equiv 0$ means the right hand side of \eqref{e:slowmot1a} depends only on $a_j$. 
Through the matching conditions, the parameters $a_j, b_j, \beta_j$ for each $j$ generally depend on  all other $j$. For the three-component model§ from \cite{HeDoeKap1},  the arising globally coupled system of algebraic equations have been derived in \cite{HDKP}. For %the Schnakenberg and 
Gierer-Meinhardt models these are contained in 
% \cite{EhRaWo} and 
\cite{EhWo} %, respectively, 
and can have singularities at which the manifold of pulse patterns undergoes a bifurcation.
The vector field for interface motion in general intricately depends on the details of the model and boundary conditions. Numerical observations suggest gradient-like dynamics, which appears to hold true\footnote{P. van Heijster. Personal communication.} for the three-component model of \cite{HeDoeKap1}. However, it seems difficult to prove this in broader generality -- for first order interaction we prove results in this direction below in \S\ref{s:fast}.

\medskip
The degenerate case $F^\rms\equiv 0$ for pulses, which was discussed after \eqref{e:smallscale-fst}, implies that $U_0$ is constant in time and $a_j = U_0(r_j)$, $b_j = \partial_x U_0(r_j)$. Hence, the equations \eqref{e:slowmot3} for each $j$ decouple, and each pulse moves according to the same scalar ODE to leading order. In particular, the reduced pulse motion is monotone, and, if global, each pulse converges to either an equilibrium or infinity.

%--------------------------------------------------
\section{First order semi-strong pulse motion}\label{s:fast}

In this section we study first order semi-strong motion of \emph{pulses} for \emph{scalar} large scale problems ($n=1$). We assume existence of smooth solutions to the reduced dynamics of \eqref{e:sestr-dyn-gen}. For pulse patterns we make the natural assumption that the states $U(r_j)=\ust_j$ are all equal and thus may be moved to zero, as in \S\ref{s:bex-pulse}; we abbreviate $C(P^-,P^+)=C(P^-,P^+,0)$. 
Thus, pulse positions are Dirichlet boundary conditions for the (leading order) second order large scale problem \eqref{e:fastlarge}, whose solution inbetween pulses (if it exists) is therefore generically determined by the pulse positions alone. 

According to \eqref{e:sestr-dyn-gen} the reduced first order semi-strong dynamics on the time scale $\tau= \eps t$ with interfaces at $r_1<\ldots<r_N$, boundary conditions at $r_0, r_{N+1}\in \R\cup\{\pm\infty\}$, and $\H(U) := H(U,0;0) - F_*^\rmf(U,0) E(U,0;0)$ from \eqref{e:fastlarge} then reads
\begin{equation}
\begin{array}{rcl}\label{e:sestr-dyn}
D_u \partial_{xx}U &=& - \H(U)\;,\quad x\in \cup_{j=0,{N}}(r_j,r_{j+1})\\
U(r_j) &=& 0\\
\frac{\rmd}{\rmd\tau} r_j &=& C\left(\partial_x U(r_j-),\partial_x U(r_j+) \right),%\\
%
%P_j^\pm &=& \partial_x U(r_j\pm),
\end{array}
\end{equation}
where $j=1,\ldots,N$. For bounded $D$ we assume separated linear or periodic boundary conditions, and in unbounded directions convergence to constant states. Note that $U$ here is $U_0$ in the notation of the previous sections. In this section subindices of $U$, $P$ are not related to the expansion in $\eps$.

The function $C(P^-,P^+)$ is an essential part of the dynamics and effectively parameterises the manifold of pulse patterns. As mentioned, even for basic examples nothing is known analytically about interfaces for $C\neq 0$. For $P^-+P^+=0$ and $C=0$, existence results are given in \cite{DoeKapPel}, but these do not cover the numerically observed folding of $C$ as $P^+$ increases. See \cite{Rey,MurOsi} and Figure~\ref {f:sestr-bif}. 

\medskip
We will show that under suitable assumptions on $C$ and $J$ the first order $N$-pulse motion is gradient-like with respect to various geometrically meaningful Lyapunov-functionals: the largest  interpulse amplitude and distance decay in time, while the smallest of these increase. This severely constrains the leading order pulse dynamics. Before a more abstract analysis, we present some numerical computations of certain $C$.
%The assumptions on $J$ are satisfied for the model class discussed in \S\ref{s:apex}, and those on $C$ hold true for the corresponding small scale problem \eqref{e:core} based on the numerical results of \cite{EhRaWo,Rey}.

%%%%%%%%%%%%%%%%%%%%%%%%
\subsection{Numerical computations of a first order pulse problem}\label{s:corenum}

Let us reconsider the small scale problem \eqref{e:core} for the two component models \eqref{e:ex}. It determines the pulse velocity and shape for first order interaction when supplied with boundary conditions as in the abstract version \eqref{e:sest-gen}.  In this section we determine 
 $c=C(P^-,P^+)$ by a numerical approach based on continuation in the parameters
 $P^\pm$. For definiteness we focus on the case $\rho=1, \gamma=0$. For $c=0$ and $P^-=-P^+$ this case has been considered by a more ad-hoc numerical approach in \cite{Rey,MurOsi,KoWaWe}. 

The reflection symmetry of  \eqref{e:sest-gen}
\begin{equation}\label{e:sestr-c-sym}
(P^-,P^+,c,\xi) \to (-P^+,-P^-,-c, -\xi),
\end{equation}
implies $C(-P,P) = 0$ for even solutions, and so we adapt parameters to
\[
P^\rma = -(P^+ + P^-), \quad P^\rms = P^+-P^-.
\]
This transforms $C(P^-,P^+)$ to
\[
C_\rms(P^\rms,P^\rma) = C\left(-(P^\rma + P^\rms)/2, (P^\rms - P^\rma)/2\right).
\]
Due to the reflection symmetry, $\xi\to -\xi$ implies $C_\rms(P^\rms,-P^\rma) \to
-C_\rms(P^\rms,P^\rma)$ so that $C_\rms(P^\rms,0)=0$ for even
solutions.

Let us denote the boundary value problem \eqref{e:sest-gen} 
(with nonlinearities as in \eqref{e:core}) compactly by $\calF(c,P^\rms,P^\rma)=0$.  
For the numerical computations we replace the infinite boundary
location of \eqref{e:sest-gen} by $L=100$; changes in $
L$ did not change the results to a noticable degree. We implemented this in the boundary value problem solver and continuation software \textsc{Auto} \cite{auto}, and solved $\calF(c,P^\rms,P^\rma)=0$ along
grid lines in the $(P^\rms,P^\rma)$-plane. In order to fix the location of the pulse we include the interior
condition $\partial_\xi u_1(0) =0$. In the implementation we therefore split the problem into one for
negative $\xi$ and one for positive $\xi$ and couple the resulting 8 equations via continuity
boundary conditions at $\xi=0$ plus $\partial_\xi u_1(0) =0$.

\begin{figure}
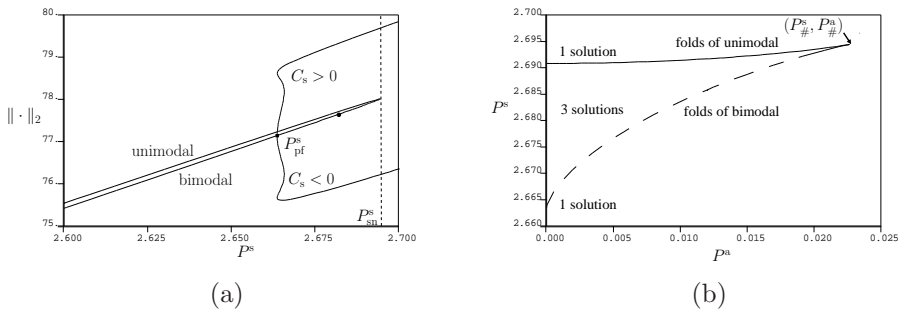

\centering
\begin{tabular}{cc}
\scalebox{0.3}{\input{sestr-sym-bif.pstex_t}} 
& \scalebox{0.3}{\input{sestr-cusp.pstex_t}} \\
(a) & (b)
\end{tabular}
\caption{(a) Bifurcation diagram for $P^\rma=0$. The $\vs_0$-component changes shape from unimodal to bimodal at the bullet. The branches with $c\neq 0$ bifurcate from the bimodal branch at $P^\rms_\pf$. (b) Fold curves of $\calF(c,P^\rms,P^\rma)=0$ and corresponding amounts of nearby solutions. The solid curve includes $P_{\rm sn}^\rms$ in (a). The dashed curve is in the region with bimodal pulses and includes the leftmost fold of the branch with $C_\rms<0$ in (a). The corner at $(P^\rms_\#, P^\rma_\#)$ is a cusp singularity.}
\label{f:sestr-bif}
\end{figure}

%It turns out that the set of solutions with zero velocity, solving 
%\begin{equation}\label{e:sestr-stand}
%\calF(0,p^\rms,0)=0
%\end{equation}
%is already fairly complicated \blue as proven in\black \cite{DoeKapPel}. In particular, it
%contains ``multi-pulse'' solutions with arbitrary numbers of \blue local maxima \red[Mit multi-pulse meinen sie in dem paper nicht voneinander entfernte Pulse!]\black in
%$\vs(\xi)$.  
The numerical computations confirm the observation from \cite{Rey,MurOsi} that the relevant branch
of stationary, even, and unimodal solutions 
folds at $P^\rms = P^\rms_\sn\approx 2.69$ and its continuation leads to bimodal (`dimpled') pulses with a local minimum at $\xi=0$. See Figure~\ref{f:sestr-bif}(a). By unimodal we mean that $\vs_0(\xi)$ has a unique critical point at $\xi=0$ (which then is the maximum). For $P^\rms<P^\rms_\sn$ these solutions correspond to stationary pulses solutions of the original problem \eqref{e:ex} and the fold corresponds to a saddle-node bifurcation of these. %The stable symmetric branch is thus a basis to compute the relevant part of the two-dimensional small scale pulse manifold. 

Beyond the results in the literature, we used the symmetric pulses as starting points for a continuation to asymmetric ones, where $P^\rma\neq 0$ thus generating a two dimensional manifold in $(c,P^\rms,P^\rma)$-space. In particular, the fold at $P^\rms = P^\rms_\sn,\; P^\rma=0$ lies on a curve of folds in the $(c,P^\rms,P^\rma)$-space, which is contained in the region $\{P^\rms>P^\rms_\sn\}$. See the solid curve in Figure~\ref{f:sestr-bif}(b).
In addition, these computations for $c\neq 0$ show that the relevant solution set of $\calF(c, P^\rms, P^\rma)=0$ indeed generates a function $C_\rms(P^\rms,P^\rma)$ in the region $P^\rms < P^\rms_\sn$. Specifically, we carefully checked that
\begin{equation}\label{e:sestr-reg}
C_\rms < 0, \quad \partial_{P^\rma} C_\rms <0, \quad \mbox{in\;} \{(P^\rms,P^\rma)| 0\leq P^\rms \leq 2.65, 0 \leq P^\rma \leq 4\},
\end{equation} 
and have not found violations of this in a larger region. Applying the symmetry $C_\rms(P^\rms,-P^\rma) = -C_\rms(P^\rms,P^\rma)$, $\xi\to -\xi$, trivially extends this into the region $P^\rma<0$. 

The extension of this manifold to the region $P^\rms>P^\rms_\sn$ can be understood via the bimodal branch at $P^\rma=0$. See the lower branch in Figure~\ref{f:sestr-bif}(a). At a certain $P_\pf^\rms<P^\rms_\sn$ a pitchfork bifurcation gives rise to a symmetric pair of solution branches with $C_s(P^\rms,0)\neq 0$ of either sign. For $P^\rma\neq 0$ a fold curve emanates from this pitchfork. See the dashed curve in Figure~\ref{f:sestr-bif}(b). This fold curve annihilates with the aforementioned fold curve in a cusp bifurcation at $(P^\rms,P^\rma)=(P^\rms_\#,P^\rma_\#)$. Away from these fold curves, the relevant part of the manifold is a function of $(P^\rms,P^\rma)$.

\medskip
For moderate $\gamma\neq 0$ in \eqref{e:ex}, the bifurcation diagram in Figure~\ref{f:sestr-bif} remains qualitatively the same; the location of the fold is monotone decreasing in $\gamma$. In the case $\gamma=-1, \rho=0$ of the basic example from \S\ref{s:bex-pulse} the diagram has changed more and is plotted in Figure~\ref{f:linear-bif}. Here the pitchfork $P^\rms_\pf$ and fold $P^\rms_\sn$ appear to have merged and no longer generate curves of folds for $P^\rma\neq 0$. 

\begin{figure}
\centering
\scalebox{0.4}{\input{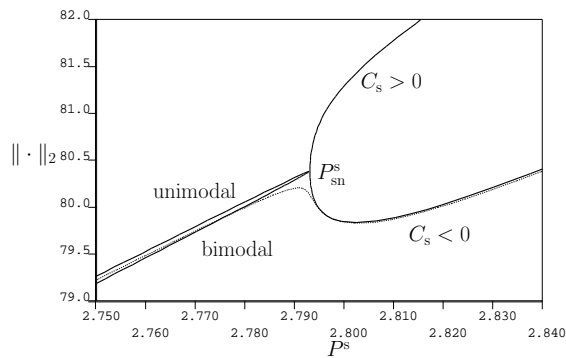}} \
\caption{Bifurcation diagram for $\gamma=-1, \rho=0$. Bold lines: $P^\rma=0$, thin dashed line: $P^\rma=0.001$. The fold $P^\rms_\sn$ and pitchfork $P^\rms_\pf$ appear to have merged at a degenerate bifurcation and no longer generate curves of folds.}
\label{f:linear-bif}
\end{figure}

%%%%%%%%%%%%%%%%%%%%%%%%
\subsection{Gradient-like nature of first order pulse interaction}

\medskip
We now return to the abstract analysis and Lyapunov-functionals of first order pulse interaction. In preparation, define the following key quantities. 

\begin{definition}\label{d:time}
Let $U(x)$ solve $D_u U_{xx} + J(U)=0$ with initial conditions $U(0)=0$, and $\partial_x U(0)=\tilde P$. Let $\Delta_\pm(\tilde P)\neq 0$ be the smallest positive, respectively largest negative value of $x$ such that $U(x)=0$, and set $\Delta_\pm(\tilde P):=\pm\infty$ correspondingly if there is no such point. 
\end{definition}

Note that due to \eqref{e:sestr-dyn}$_2$, if $U\neq 0$ in $(r_j,r_{j+1})$ then 
\[
r_{j+1}-r_j = \Delta_+(\partial_x U(r_j+))= \Delta_-(\partial_x U(r_{j+1}-)).
\]

\begin{definition}%\label{d:time}
Let $U(x)$ solve $D_u U_{xx} + J(U)=0$ with initial conditions $U(0)=\tilde U$, and $\partial_x U(0)=0$. Let $\Delta^*(\tilde U)\neq 0$ be the smallest positive value of $x$ such that $\partial_x U(x)=0$, and set $\Delta^*(\tilde U):=\infty$ if there is no such point. 
\end{definition}

We shall show that the aforementioned decay and growth of the largest and smallest amplitudes between pulses occurs, as long as the following hypothesis holds. For the distances a stronger assumption is required; see Corollary~\ref{c:sestr}. Figures~\ref{f:Lyap-per},~\ref{f:Lyap-sep} illustrate the results.

%\begin{definition}\label{d:DN}
\begin{Hypothesis}\label{h:DN}
For $n=1$ assume that there is $T>0$ such that $\calD_N$ is a non-empty set of pulse positions $r(\tau)$, $r_1(\tau) < \ldots < r_N(\tau)$, which solve \eqref{e:sestr-dyn} for $\tau\in (0,T]$ with bounded $U(\tau,x)$, $x\in D$, that is continuously differentiable in $\tau$ and such that the following hold.
\begin{enumerate}
\item $U(\tau,x)> 0$ for $x\neq r_j(\tau)$,
\item $\Delta^*(\tU)$ is bounded and grows strictly in $\tU$ in a neighborhood of $\max\{U(\tau,x): r_j(\tau)< x < r_{j+1}(\tau)\}$ for all $j=1,\ldots,N-1$, as well as for $j=0$ and $j=N$ if $r_0$ and $r_{N+1}$ are bounded, respectively,
\item $\sgn (C(P^-_j,P^+_j)) = \sgn(P^-_j+P^+_j)$.
%
%\item $J(U) \neq 0$ for $0< U \leq \max\{U(x) \,:\; x\in D\}$.
\end{enumerate}
%\end{definition}
\end{Hypothesis}

\medskip
A priori the set $\calD_N$ could be empty, but Remark~\ref{r:mono} below and numerical evidence as in Figure~\ref{f:sestr-bif} corroborate that this is not the case for \eqref{e:core} in a broad range of $\rho, \gamma$. However, even then $C(P^-,P^+)$ need not be well-defined globally in time as the dynamics might drive solutions over a bifurcation point of $C$, for instance a fold. %See \cite{DoeKapPel,EhRaWo, KoWaWe,MurOsi,Rey}. 
Another possibility for the PDE dynamics to leave the slow manifold of pulse patterns is a transverse instability of the manifold, beyond which \eqref{e:sestr-dyn} is not meaningful. %The author has collected numerical evidence % in \cite{EhRaWo} shows that this can indeed occur. 

\begin{Remark}\label{r:mono}
The monotonicity in Hypothesis~\ref{h:DN}(2) holds for solutions $U$ of \eqref{e:sestr-dyn} if $\H'(U)\leq0$ and $\H(U)U>0$ for $x\in(r_j,r_{j+1})$. (The proof is given at the end of this section.) This is always the case near saddle points (which means relatively large pulse distance). The known concrete models from \S\ref{s:apex} for first order semi-strong interaction have $\H$ of the form $\H(U) = \const_1 -  \const_2 U$ so that the monotonicity holds for all relevant solutions if $\const_1, \const_2 >0$, and $U < \const_1/\const_2$. 
\end{Remark}

%For the harmonic oscillator $x^*_\pm$ is constant, which shows that the concavity assumption in this remark is not necessary.
The general monotonicity problem of $\Delta^*$ is closely related to the monotonicity of the period function which has been extensively studied in the literature. See, e.g., \cite{FreireEtAl} and the references therein for a recent account. A fairly practical criterion for monotonicity given in \cite{Chicone} is that $V/\H^2$ be convex.

\begin{Remark}\label{r:sign}
The sign condition $U> 0$ except at pulse positions in $\calD_N$ is not necessary, but holds for the model class in \S\ref{s:apex}. The case $U<0$ can be treated analogously. In fact, if $U$ lies on a periodic orbit, then an even number of sign changes of $U$ between pulses can be removed by shortening $D$ without changing the local dynamics.
\end{Remark}

\begin{figure}
\centering
\scalebox{1}{\input{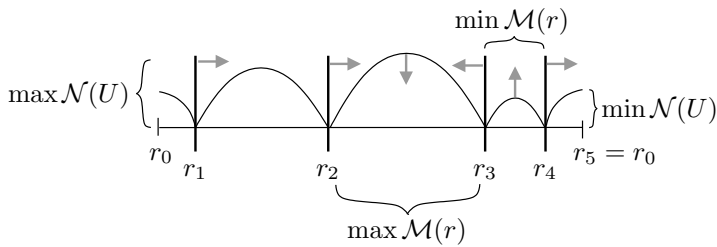}} 
\caption{Illustration of the Lyapunov functionals and interface dynamics for periodic $D$ and $N=4$. In this case there are no further local extrema of $\calN(U)$, $\calM(r)$.}\label{f:Lyap-per}
\end{figure}

\medskip
We define the set of local maxima of a large scale solution by
\[
  \calN(U) := \left\{ U(x) \,:\, U(x)\mbox{ is a local maximum for }\, x\in D,\, x\neq r_j, \,j=1,\ldots, N  \right\},
\]
and with $r=(r_1,\dots, r_N)$ the distances between pulses by
\[
\calM(r) := \{d_j=r_{j+1} - r_j : j=0,\ldots,N\}.
\] 

For $x\in(r_0,r_1)\cup(r_N,r_{N+1})$, a technical issue is to identify an adjusted maximum and distance that are suitably comparable to those inbetween pulses. Therefore, we first consider periodic $D$ so that $j \in\mathbb{Z}$ mod $N+1$.

\begin{definition}\label{d:mima}
Let $j=0,\ldots, N$.
\begin{enumerate}
\item Let $U_j^*:=\sup\{U(x): x\in (r_j,r_{j+1})\}$. We call $U^*$ a local maximum of $\calN(U)$ if $U^*=U^*_j$ for some $j$ and $U^*> U_{j-1}^*$, $U^*\geq U_{j+1}^*$ or $U^*\geq U_{j-1}^*$, $U^*> U_{j+1}^*$. 
\item We call $d^*$ a local maximum of $\calM(r)$ if $d^*= d_j<\infty$ for some $j$ and if $d^* > d_{j-1}$, $d^* \geq d_{j+1}$ or $d^* \geq d_{j-1}$, $d^* > d_{j+1}$. 
\item Local minima are defined analogously in each case.
\end{enumerate}
\end{definition}

\medskip
The main results are the following. 

\begin{theorem}\label{t:sestr}
Assume Hypothesis~\ref{h:DN}, $n=1$, periodic $D$ and $r=r(\tau)\in\calD_N$ with associated $U=U(\tau,x)$. For each $j=0,\ldots,N$ the following holds.
\begin{enumerate}
\item If $U^*_j$ is a local maximum of $\calN(U)$ then $\frac{\rmd}{\rmd \tau} U^*_j<0$ and $\frac{\rmd}{\rmd \tau} d_j<0$.
\item If $U^*_j$ is a local minimum of $\calN(U)$ then $\frac{\rmd}{\rmd \tau} U^*_j>0$ and $\frac{\rmd}{\rmd \tau} d_j>0$.
\item (i) $\frac{\rmd}{\rmd \tau} U_j^*=0$ $\Leftrightarrow$ (ii) $\frac{\rmd}{\rmd \tau} d_j=0$ $\Leftrightarrow$ (iii)  $U_j^*=U_{j-1}^*=U_{j+1}^*$.%, (iv) $d_j=d_{j-1}=d_{j+1}$.
\end{enumerate}
\end{theorem}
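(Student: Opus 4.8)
The plan is to convert all three quantities $U_j^*$, $d_j$, and the neighboring data $U_{j\pm1}^*$ into functions of the single scalar slope parameter along each interpulse arc, and then to chase signs through the chain of monotone relations established in Definitions~\ref{d:time}, \ref{d:DN} and Remark~\ref{r:mono}. Concretely, fix $j$ and let $P_j^+ = \partial_x U(r_j+)$ and $P_{j+1}^- = \partial_x U(r_{j+1}-)$ denote the outgoing/incoming slopes bounding the arc on $(r_j,r_{j+1})$. Since $U>0$ on this arc by Definition~\ref{d:DN}(1) and $U$ solves the autonomous ODE $D_uU_{xx}+\H(U)=0$ with $U(r_j)=U(r_{j+1})=0$, the arc has a unique interior critical point where $U=U_j^*$; by the energy (first integral) relation this critical value is a strictly monotone function of $P_j^+$ (and of $-P_{j+1}^-$), and the half-periods give $\Delta_+(P_j^+)=\Delta_-(-(-P_{j+1}^-))$ together with $d_j=\Delta_+(P_j^+)$. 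The first substantive step is therefore to record these monotonicity dictionaries: $U_j^*$ increasing in $P_j^+$ and decreasing in $P_{j+1}^-$; and, using Definition~\ref{d:DN}(2) via the identity $d_j = \Delta^*(U_j^*)$ translated through Definition~\ref{d:time}, that $d_j$ is strictly increasing in $U_j^*$. (This last identity, $\Delta_+(\tilde P)$ expressed as $\Delta^*$ of the attained amplitude, is just time-reversal/reflection symmetry of the ODE about its turning point, so Remark~\ref{r:mono}'s hypotheses feed in here.)

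The second step is to differentiate $U_j^* = U_j^*\big(P_j^+(\tau),P_{j+1}^-(\tau)\big)$ in $\tau$. By the chain rule and the sign data above,
\[
\frac{\rmd}{\rmd\tau}U_j^* = (\text{positive})\cdot\frac{\rmd}{\rmd\tau}P_j^+ \;-\;(\text{positive})\cdot\frac{\rmd}{\rmd\tau}P_{j+1}^-,
\]
and similarly $\frac{\rmd}{\rmd\tau}d_j$ has the same sign as $\frac{\rmd}{\rmd\tau}U_j^*$ by the strict monotonicity of $d_j$ in $U_j^*$. So it suffices to control the two slope derivatives. Now differentiate the boundary-value structure $U(r_j)=0$: because the arc to the left of $r_j$ and the arc to the right both pin $U=0$ at the moving point $x=r_j(\tau)$, implicit differentiation relates $\frac{\rmd}{\rmd\tau}P_j^\pm$ to $\frac{\rmd}{\rmd\tau}r_j$ (and to the data at the far ends $r_{j-1}$, $r_{j+1}$); the key qualitative output is that pushing $r_j$ to the right (increasing $r_j$) \emph{steepens} the slope seen from the shrinking left arc and \emph{flattens} the slope seen from the growing right arc, and vice versa — i.e. the ``fan'' picture of Figure~\ref{f:Lyap-per}. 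Combined with the law of motion $\frac{\rmd}{\rmd\tau}r_j = C(P_j^-,P_j^+)$ and the sign hypothesis $\sgn C(P^-,P^+) = \sgn(P^-+P^+)$ from Definition~\ref{d:DN}(3), one gets: the interface $r_j$ moves toward the arc with the larger amplitude neighbor. Feeding this back, if $U_j^*$ is a strict local max ($U_j^* > U_{j-1}^*$ and $\geq U_{j+1}^*$), then both $r_j$ and $r_{j+1}$ move \emph{into} the arc $(r_j,r_{j+1})$, shrinking it, hence $\frac{\rmd}{\rmd\tau}d_j<0$ and therefore $\frac{\rmd}{\rmd\tau}U_j^*<0$; the local-min case is the mirror image, and part (3) follows because every inequality used becomes an equality exactly when $U_{j-1}^*=U_j^*=U_{j+1}^*$ (no interface bounding the arc moves, so $d_j$ and $U_j^*$ are stationary — and conversely, a stationary $d_j$ forces both bounding $C$-values to vanish, which by Definition~\ref{d:DN}(3) forces the slope sums, hence the neighboring amplitudes, to agree).

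I expect the main obstacle to be the bookkeeping in the second step: making rigorous the implicit-function relation between $\frac{\rmd}{\rmd\tau}P_j^\pm$, $\frac{\rmd}{\rmd\tau}r_j$, and the adjacent amplitudes, and in particular verifying that the ``steepening/flattening'' signs are correct uniformly on $\calD_N$ rather than just near saddle equilibria. This is where the strict monotonicity of $\Delta^*$ (Definition~\ref{d:DN}(2), justified by Remark~\ref{r:mono}) and the sign condition on $C$ (Definition~\ref{d:DN}(3)) must be invoked in exactly the right places; the boundary arcs $(r_0,r_1)$ and $(r_N,r_{N+1})$ need the adjusted notions of local extremum from Definition~\ref{d:mima} so that the same fan argument applies there, and the periodic case (treated here) is what lets us avoid separate one-sided arguments. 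Everything else — the energy relation, the half-period identity, the chain rule — is routine ODE phase-plane analysis.
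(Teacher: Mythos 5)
Your proposal follows essentially the same route as the paper's proof: the phase-plane ordering of interpulse arcs (larger boundary slope at $U=0$ $\Leftrightarrow$ larger maximum), the sign condition of Definition~\ref{d:DN}(3) to conclude that each bounding interface moves toward its larger-amplitude neighbor so that $d_j$ strictly shrinks at a local maximum of $\calN(U)$, and the strict monotonicity of $\Delta^*$ from Definition~\ref{d:DN}(2) to transfer the sign of $\frac{\rmd}{\rmd\tau}d_j$ to $\frac{\rmd}{\rmd\tau}U_j^*$. The implicit differentiation of the slopes $P_j^\pm$ in $\tau$ that you flag as the main obstacle is in fact unnecessary: the paper only ever uses the signs of $\frac{\rmd}{\rmd\tau}r_j$ and $\frac{\rmd}{\rmd\tau}r_{j+1}$ together with the static relation $d_j=2\Delta^*(U_j^*)$, so that step can simply be dropped.
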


\begin{corollary}\label{c:sestr}
Under the assumptions of Theorem~\ref{t:sestr} it holds that $\max \calN(U)$ and $-\min \calN(U)$ are strict Lyapunov functionals in the sense that these are either constant with constant $r$, or the functionals decay strictly in $\tau$ until $r$ lies in the boundary of $\calD_N$ (which may never happen). 
If in addition $\Delta^*$ is strictly monotone increasing on the interval $(\min\calN(U),\max\calN(U))$, then also $\max \calM(r)$ and $-\min \calM(r)$ are Lyapunov functionals in this sense.
\end{corollary}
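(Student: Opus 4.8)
The plan is to reduce everything to the geometry of the one–parameter family of arcs of the large–scale ODE $D_u\partial_{xx}U+\H(U)=0$ that join consecutive pulses, and then to read off $\sgn\big(\frac{\rmd}{\rmd\tau}r_j\big)$ from the amplitudes of the two arcs adjacent to $r_j$. First, for fixed $j$ consider $U$ on an interval $(r_j,r_{j+1})$, where $U>0$ and $U(r_j)=U(r_{j+1})=0$ by Definition~\ref{d:DN}(1). Multiplying the ODE by $\partial_x U$ shows $\tfrac12 D_u(\partial_x U)^2+\Phi(U)$ is constant along the arc, where $\Phi'=\H$. Evaluated at the two zeros this gives $\big(\partial_x U(r_j+)\big)^2=\big(\partial_x U(r_{j+1}-)\big)^2$, hence, by the signs of the slopes at the ends of a positive arc, $P_j:=\partial_x U(r_j+)=-\partial_x U(r_{j+1}-)>0$; evaluated at a point where $U$ attains $U_j^*:=\sup_{(r_j,r_{j+1})}U$ (so $\partial_x U=0$) it gives $\tfrac12 D_u P_j^2=\Phi(U_j^*)-\Phi(0)$, so that — using $\H>0$ on the relevant range, cf.\ Remark~\ref{r:mono} — the correspondence $U_j^*\leftrightarrow P_j$ is a strictly increasing bijection. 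Finally $d_j:=r_{j+1}-r_j=\Delta_+(P_j)$ by Definition~\ref{d:time}, so $d_j$ is a function of $U_j^*$ through the ODE's time map, strictly increasing precisely when $\Delta^*$ is strictly increasing near $U_j^*$, which holds on $\calD_N$ by assumption~(2).

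Next, the slope of $U$ at $r_j$ from the left is minus the left–end slope of the arc $(r_{j-1},r_j)$, i.e.\ $P_j^-:=\partial_x U(r_j-)=-P_{j-1}$, while $P_j^+:=\partial_x U(r_j+)=P_j$; thus $P_j^-+P_j^+=P_j-P_{j-1}$, which by the previous paragraph has the same sign as $U_j^*-U_{j-1}^*$. Combined with the sign hypothesis Definition~\ref{d:DN}(3) this yields the key identity
\begin{equation}\label{e:signrule}
\sgn\Big(\tfrac{\rmd}{\rmd\tau}r_j\Big)=\sgn\big(U_j^*-U_{j-1}^*\big),
\end{equation}
so each pulse drifts towards the neighbouring arc of larger amplitude. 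Now if $U_j^*$ is a local maximum of $\calN(U)$, then by Definition~\ref{d:mima}(1) $U_j^*\ge U_{j-1}^*$ and $U_j^*\ge U_{j+1}^*$ with at least one inequality strict; by \eqref{e:signrule}, $\frac{\rmd}{\rmd\tau}r_j\ge 0$ and $\frac{\rmd}{\rmd\tau}r_{j+1}\le 0$, one of them strictly, so $\frac{\rmd}{\rmd\tau}d_j=\frac{\rmd}{\rmd\tau}r_{j+1}-\frac{\rmd}{\rmd\tau}r_j<0$, and hence $\frac{\rmd}{\rmd\tau}U_j^*<0$ since $d_j$ is strictly increasing in $U_j^*$ on $\calD_N$. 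Part~(2) is identical with all inequalities reversed. For part~(3): (i)$\Leftrightarrow$(ii) is immediate from the strictly monotone dependence $d_j\leftrightarrow U_j^*$; (iii)$\Rightarrow$(ii) holds because equal amplitudes force $P_{j-1}=P_j=P_{j+1}$, hence $\frac{\rmd}{\rmd\tau}r_j=\frac{\rmd}{\rmd\tau}r_{j+1}=0$ by \eqref{e:signrule}; for (ii)$\Rightarrow$(iii), if the three amplitudes are not all equal then either $U_j^*$ is a local extremum, in which case $\frac{\rmd}{\rmd\tau}d_j\neq0$ by (1) or (2), or $U_{j-1}^*,U_j^*,U_{j+1}^*$ is strictly monotone, in which case $\frac{\rmd}{\rmd\tau}r_j$ and $\frac{\rmd}{\rmd\tau}r_{j+1}$ are non-zero of equal sign and $\frac{\rmd}{\rmd\tau}r_{j+1}\neq\frac{\rmd}{\rmd\tau}r_j$ follows from strict monotonicity of $C$ in each of its slope arguments.

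For Corollary~\ref{c:sestr}, since each between–pulse arc is a concave bump with a single interior maximum one has $\calN(U)=\{U_1^*,\dots,U_N^*\}$ and $\max\calN(U)=\max_j U_j^*$. Every arc realising this maximum is a local maximum of $\calN(U)$ unless it lies in a plateau of maximisers, at whose edge arcs Theorem~\ref{t:sestr}(1) still applies; the standard argument for the maximum of finitely many $C^1$ functions (with a domino argument to melt plateaus of extremal arcs), together with Theorem~\ref{t:sestr}(1) and~(3), then gives that $\tau\mapsto\max\calN(U)$ is non-increasing, is constant on a time interval only where $U_1^*=\dots=U_N^*$ — whence $r$ is constant there by \eqref{e:signrule} — and decays strictly otherwise, as long as $r(\tau)$ stays in $\calD_N$; the statement for $-\min\calN(U)$ is symmetric via Theorem~\ref{t:sestr}(2). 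If in addition $\Delta^*$ is strictly increasing on all of $\big(\min\calN(U),\max\calN(U)\big)$, then $U_j^*\leftrightarrow d_j$ is a globally strictly increasing bijection on the relevant range, so the widest gap sits at the highest-amplitude arc: $\max\calM(r)$ is a strictly increasing function of $\max\calN(U)$ and $\min\calM(r)$ one of $\min\calN(U)$, and the Lyapunov properties of $\max\calM(r)$ and $-\min\calM(r)$ follow from those of $\max\calN(U)$ and $-\min\calN(U)$.

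The routine-looking parts are the phase–plane bookkeeping and the final comparison argument; the real content is the sign rule \eqref{e:signrule}, resting on energy conservation along the arcs and the monotone amplitude–slope correspondence. The delicate points I expect to be the main obstacle are the implication (ii)$\Rightarrow$(iii) of Theorem~\ref{t:sestr} in a strictly monotone amplitude run — where one must use monotonicity of $C$ in its arguments rather than merely its sign — and the correct treatment of plateaus of extremal arcs in the Lyapunov argument, which in turn relies on the arc width being genuinely governed by the time map $\Delta^*$ as encoded in the definition of $\calD_N$.
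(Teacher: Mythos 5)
Your proposal is correct and follows essentially the same route as the paper: part one is read off directly from Theorem~\ref{t:sestr} (with the plateau/degenerate-maximum cases handled by the mixed strict/non-strict inequalities in Definition~\ref{d:mima}), and part two uses exactly the paper's observation that the global monotonicity of $\Delta^*$ turns the correspondence $U_j^*\leftrightarrow d_j$ into a strictly increasing bijection, so that local extrema of $\calM(r)$ are local extrema of $\calN(U)$ and inherit their decay. The only (harmless) deviations are that you re-derive the theorem's slope--amplitude ordering from the first integral $\tfrac12 D_u(\partial_xU)^2+\Phi(U)$ rather than from planar trajectory ordering, and that your patch of the implication (ii)$\Rightarrow$(iii) via strict monotonicity of $C$ concerns the theorem, which the corollary takes as given.
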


%Recall from Remark~\ref{r:mono} that the assumptions are satisfied for the models from \S\ref{s:ex}.

%Recall that for unbounded segments we assume convergence to a constant state, which is natural on account of Remark~\ref{r:sign}, because in this setting the only other natural assumption is a periodic orbit, which has sign changes. However, in that case the solution can be equivalently reduced to the bounded segment until the first root until $P_0^-=0$. \red [[REMOVE Para?]]\black

\begin{figure}
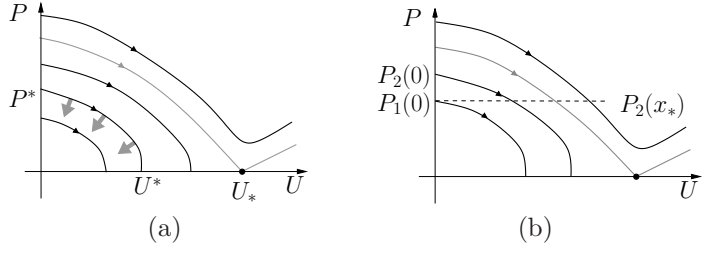

\centering
\begin{tabular}{cc}
\scalebox{0.45}{\input{phase1.pstex_t}} 
& \scalebox{0.45}{\input{phase2.pstex_t}}\\
(a) & (b)
\end{tabular}
\caption{Sketch of the phase space geometry. Thin lines are trajectories (separatrices in grey), bullets equilibria. The pictures extend by reflection about the $U$-axis and reversing arrows on trajectories.  (a) For the proof of Theorem~\ref{t:sestr}. The thick gray arrows indicate the dynamics with respect to $\tau$ for a local maximum $U_*\in\calN(U)$. (b) For the proof of Remark~\ref{r:mono}. }\label{f:large}
\end{figure}

\begin{proof}(Theorem~\ref{t:sestr}).
For $n=1$ the scalar large scale problem is reversibly symmetric (from reflection symmetry in $x$) with phase space $(U,P)$, $P= \partial_x U$. For the given solution we denote $(U_j,P_j):=(U,\partial_x U)$ on $[r_j,r_{j+1}]$. 

Since $U_j> 0$ in $(r_j,r_{j+1})$ and $U_j(r_j)=U_j(r_{j+1})=0$, the reversible symmetry about the $U$-axis implies that a unique intersection of $(U_j,P_j)$ with the $U$-axis occurs at $U_j^*\in\calN(U)$. In particular, $d_j=\Delta_+(P_j(r_j+))$ and $P_j(r_j+d_j/2)=0$, which means $U_j^*=U_j(r_j+d_j/2)$.

Such solutions are ordered in the following sense. Consider solutions $\tilde U\geq 0$ to \eqref{e:sestr-dyn}$_1$ with $\tilde U(0)=0$, $\partial_x\tilde U(0)= P^*$ and $\Delta_+(P^*)<\infty$. Since the phase space is two-dimensional, the maxima $U^* = \tilde U(\Delta_+(P^*)/2)$ are \emph{strictly increasing} in $P^*$ as long as $\Delta_+(P^*)<\infty$. See Figure~\ref{f:large}.

\medskip
For $U_j^*$ a local maximum of $\calN(U)$, set $P_j^\pm:= P(r_j\pm)$. The above ordering for interpulse profiles implies that $-P_j^- < P_j^+$ and $-P_{j+1}^-\geq P_{j+1}^+$, or $-P_j^- \leq P_j^+$ and $-P_{j+1}^- > P_{j+1}^+$. By Hypothesis~\ref{h:DN}(3) we thus have $\dot r_j >0$ and $\dot r_{j+1}\leq 0$ or $\dot r_j \geq 0$ and $\dot r_{j+1} < 0$. 
Therefore,
\begin{equation}\label{e:gen-dists}
\frac{\rmd}{\rmd \tau}(r_j - r_{j+1}) >0,
\end{equation}
and monotonicity of $\Delta_+$ in Hypothesis~\ref{h:DN}(2) implies that $\frac{\rmd}{\rmd \tau} U_j^*<0$. 
Analogously we find that $\frac{\rmd}{\rmd \tau} U_j^*>0$ in case of a local minimum, which proves the first two parts of the theorem. 

For the third part, it follows from the above that $\frac{\rmd}{\rmd \tau} U_j^*\neq0$ implies $U_j^*\neq U_k^*$ for $k=j-1$ or $k=j+1$. Since $U_j^*=U_k^*$ implies that $U_j$ and $U_k$ lie on the same trajectory, equivalence of (i) and (iii) follows from Hypothesis~\ref{h:DN}(3). Equivalence of (i) and (ii) is a direct consequence of Hypothesis~\ref{h:DN}(2). \end{proof}

\begin{Remark}
Due to the global ordering of trajectories noted in the proof, there are many more such Lyapunov-functionals: in $(U,P)$-space, the intersection points with any fixed line through the origin in the positive quadrant move towards the origin. See Figure~\ref{f:large}.
\end{Remark}
%Finally, we prove Remark~\ref{r:mono}.

\begin{proof}(Corollary~\ref{c:sestr}).
The first part immediately follows from Theorem~\ref{t:sestr}. 

Concerning the second part, it follows from the global monotonicity assumption that if $d_j$ is a local extremum of $\calM(r)$, then $U_j^*$ is a local extremum of $\calN(U)$. Now application of the proof of Theorem~\ref{t:sestr} implies the claim.
\end{proof}

\begin{proof}(Remark~\ref{r:mono}).
Generally, $U(r_j)=U(r_{j+1})=0$ and $\sgn(U(x))=\pm 1$ in $(r_j,r_{j+1})$ requires $\sgn(J(U))= \pm1$ at the critical point $U_j^*$. Hence, $J(U)\neq 0$ requires $\sgn(J(U))= \sgn(U)$ for solvability of \eqref{e:sestr-dyn}.
We consider $U, \H(U)> 0$; the negative case follows by a symmetric argument. Hence, we have $\partial_x P = -J(U)<0$ so that $P$ decays strictly in $x$.
 
Let $(U_1, P_1)$, $(U_2,P_2)$ be two solutions to $D_u U_{xx} + J(U)=0$ with $0<P_1(0)<P_2(0)$, $U_1(0)=U_2(0)=0$. Let $x_j>0$ be smallest so that $P_j(x_j)=0$, $j=1,2$. We need to show that $x_1<x_2$.

%From $\frac{\rmd}{\rmd x} U = P$ we have $U_2(x)>U_1(x)$ for small $x>0$.

The strict decay of $P_2$ implies that there is smallest $x_*>0$ so that $P_2(x_*)=P_1(0)$. See Figure~\ref{f:large}(b). From the ordering of trajectories noted in the proof of Theorem~\ref{t:sestr} we have that $P_1(y_1)=P_2(y_2)$, implies $U_1(y_1)<U_2(y_2)$ for $y_j\in(0,x_j]$. Now $\H>0$ and $\H'\leq 0$ gives $0> -\H(U_2(y_2)) \geq -\H(U_1(y_1))$. Hence, $\partial_x P_j =-\H(U_j)$, $j=1,2$ implies $P_1(x) \leq P_2(x_*+x)$ for all $x\in[0,x_1]$ and so $x_2>x_1$. 
%
%Assume for contradiction $x_1\geq x_2$ so that $P_1(x_1)\geq P_2(x_2)=0$. Since $P_1(x)<P_2(x)$ for small $x$, there must be a smallest $x_*\in(0,x_2]$ such that $P_1(x_*)=P_2(x_*)$. By the ordering of trajectories noted in the proof of Theorem~\ref{t:sestr} we have $U_1(x_*)<U_2(x_*)$. Since $\H>0$ and $\H'\leq 0$ we have $0> -\H(U_1(x_*)) \geq -\H(U_2(x_*))$, but from $\partial_x P =-\H(U)$ it follows that $P_1(x_*+x) \geq P_2(x_*+x)$ for all small $x>0$. Since $x_*$ was chosen maximal it follows $x_*=x_2$. See Figure~\ref{f:large}.
%. Hence, $P_2$ cannot `catch up' with $P_1$ and therefore $x_1<x_2$.  
\end{proof}

%-----------------
\subsection{Separated boundary conditions and unbounded $D$}~

\begin{figure}
\centering
\scalebox{1}{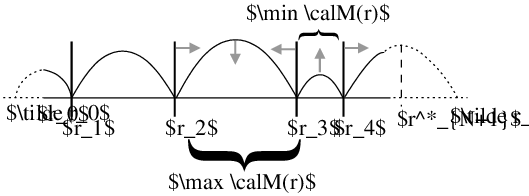} 
\caption{Illustration of the Lyapunov functions and interface dynamics for bounded $D$ with separated boundary conditions and $N=5$. Dashed lines mark the extensions beyond $D$. The extension on the left generates local minima of $\calN(U)$ and $\calM(r)$, and that on the right local maxima.}\label{f:Lyap-sep}
\end{figure}

In order to make the boundary segments $U_0$ and $U_N$ comparable with the interpulse segments, we define $\tilde r_0:= r_1 + \Delta_-(\partial_x U(r_1-))$ and  $\tilde r_{N+1}:= r_N + \Delta_+(\partial_x U(r_N+))$. We further adjust Definition~\ref{d:mima} and the definitions of $\calN(U)$, $\calM(r)$ to non-periodic domains as follows: (1) replace $r_0$ with $\tilde r_0$ and $r_{N+1}$ by $\tilde r_{N+1}$, (2) set $U^*_{-1}=U^*_0$, $d_{-1} =d_0$ and $U^*_{N+2}=U^*_{N+1}$, $d_{N+2} =d_{N+1}$. 

\begin{theorem}\label{t:sep}
Assume Hypothesis~\ref{h:DN} and linear separated boundary conditions. 
%\begin{enumerate}
%\item If $\Delta_-(\partial_x U(r_1-))>-\infty$ then 
Theorem~\ref{t:sestr} holds with respect to the adjusted $\calN(U)$, $\calM(r)$.
% when restricting the conclusions to those $j$ where $d_j<\infty$.
%
%\item If $\Delta_+(\partial_x U(r_N+))<\infty$ then Theorem~\ref{t:sestr} holds with respect to the modified $\calN(U)$ and $\calM(r)$. 
%
%\item If items (1) and (2) apply
Moreover, if $(\tilde r_0, \tilde r_{N+1})$ is bounded, then Corollary \ref{c:sestr} holds for the Lyapunov functionals defined on $[\tilde r_0, \tilde r_{N+1}]$ with the adjusted $\calN(U)$, $\calM(r)$. 
%\end{enumerate}
\end{theorem}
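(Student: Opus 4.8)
The plan is to re-run the proofs of Theorem~\ref{t:sestr} and Corollary~\ref{c:sestr} essentially verbatim, the only genuinely new ingredient being the two outermost arcs. The adjustments $\tilde r_0 := r_1 + \Delta_-(\partial_x U(r_1-))$ and $\tilde r_{N+1} := r_N + \Delta_+(\partial_x U(r_N+))$ are designed exactly so that, continuing $U|_{[r_0,r_1]}$ to the left by the ODE \eqref{e:sestr-dyn}$_1$ until it returns to $\{U=0\}$, the segment on $[\tilde r_0,r_1]$ becomes a complete phase-space arc from $\{U=0\}$ back to $\{U=0\}$, of length $d_0 = r_1-\tilde r_0 = -\Delta_-(\partial_x U(r_1-))$ and amplitude $U_0^* = \sup\{U(x):x\in(\tilde r_0,r_1)\}$ attained at its midpoint, and symmetrically on $[r_N,\tilde r_{N+1}]$. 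Thus all arcs with indices $0,\dots,N$ are now of interpulse type, and the conventions $U_{-1}^*=U_0^*,\ d_{-1}=d_0$ together with their right analogues make the inequalities of Definition~\ref{d:mima} at $j=0$ and $j=N$ collapse to the one-sided comparisons $U_0^*\gtrless U_1^*$, $d_0\gtrless d_1$ and $U_N^*\gtrless U_{N-1}^*$, $d_N\gtrless d_{N-1}$; in particular a boundary arc carries a \emph{strict} local extremum of $\calN(U)$ precisely when $U_0^*\neq U_1^*$ (resp.\ $U_N^*\neq U_{N-1}^*$).

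For $1\le j\le N-1$ everything is identical to the periodic case and the proof of Theorem~\ref{t:sestr} applies unchanged. For $j=0$ — the case $j=N$ being symmetric, with $\Delta_+$ and $\Delta_-$ interchanged — set $P_1^\pm := \partial_x U(r_1\pm)$. As in the proof of Theorem~\ref{t:sestr}, the ordering of trajectories gives $U_0^* > U_1^* \Leftrightarrow -P_1^- > P_1^+ \Leftrightarrow P_1^-+P_1^+<0$, which by Definition~\ref{d:DN}(3) means $\frac{\rmd}{\rmd\tau}r_1 = C(P_1^-,P_1^+)<0$, and similarly $U_0^*<U_1^*\Leftrightarrow \frac{\rmd}{\rmd\tau}r_1>0$, $U_0^*=U_1^*\Leftrightarrow \frac{\rmd}{\rmd\tau}r_1=0$. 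So $U_0^*$ is a local maximum (resp.\ minimum) of the adjusted $\calN(U)$ iff $\frac{\rmd}{\rmd\tau}r_1<0$ (resp.\ $>0$). In view of Theorem~\ref{t:sestr}(1)--(3) it therefore suffices to show
\[
\sgn\Big(\frac{\rmd}{\rmd\tau}U_0^*\Big)=\sgn\Big(\frac{\rmd}{\rmd\tau}d_0\Big)=\sgn\Big(\frac{\rmd}{\rmd\tau}r_1\Big).
\]

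This is the main obstacle. Since $r_0$ and the boundary condition imposed there are fixed, the boundary arc on $[r_0,r_1]$, hence $P_1^-$, $U_0^*$ and $d_0=-\Delta_-(P_1^-)$, is determined by $r_1$ alone. A shooting argument in the $(U,\partial_x U)$-plane shows that decreasing $r_1-r_0$ compresses the arc: $|P_1^-|$ decreases, so by the ordering of trajectories $U_0^*$ decreases, and by the strict monotonicity of $\Delta^*$ (equivalently of the half-period $\Delta_+$) demanded for the end arcs in Definition~\ref{d:DN}(2), in force when $r_0$ is bounded, $d_0$ decreases too; hence $\frac{\rmd}{\rmd\tau}U_0^*$, $\frac{\rmd}{\rmd\tau}d_0$ and $\frac{\rmd}{\rmd\tau}r_1$ share a sign. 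The delicate point is to carry this out uniformly over the admissible separated boundary conditions. For no-flux conditions it is transparent — the virtual peak sits at $r_0$, $d_0=2(r_1-r_0)$ is linear in $r_1$, and reflecting $U$ about $r_0$ and $r_{N+1}$ even turns the problem into a periodic $2N$-pulse problem to which Theorem~\ref{t:sestr} applies directly — while for Dirichlet or Robin conditions one must track on which side of $r_0$ the virtual peak falls. In an unbounded direction the corresponding end arc degenerates to a half-line limiting on a constant state, $d_0=+\infty$ is excluded from $\calM(r)$ and (generically) contributes no attained local maximum to $\calN(U)$, so there is nothing to prove. Combined with the previous paragraph this gives Theorem~\ref{t:sestr}(1)--(3) for the adjusted $\calN(U)$, $\calM(r)$, the first assertion.

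For the second assertion one reproduces the proof of Corollary~\ref{c:sestr}: when $(\tilde r_0,\tilde r_{N+1})$ is bounded all arcs $0,\dots,N$ are finite and contribute, $\max\calN(U)$ is attained at some $U_j^*$ which is then a local maximum of the adjusted $\calN(U)$, so by the adjusted Theorem~\ref{t:sestr} it decays strictly unless all of $U_0^*,\dots,U_N^*$ coincide with stationary $r$, and symmetrically for $-\min\calN(U)$; if in addition $\Delta^*$ is bounded and strictly increasing on $(\min\calN(U),\max\calN(U))$ then a local extremum of $\calM(r)$ forces a local extremum of $\calN(U)$ on the same arc, so $\max\calM(r)$ and $-\min\calM(r)$ are Lyapunov functionals in the same sense; the functionals are naturally defined on $[\tilde r_0,\tilde r_{N+1}]$, since that is the $x$-interval carrying the arcs $0,\dots,N$.
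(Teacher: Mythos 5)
Your proposal is correct and takes essentially the same route as the paper, whose entire proof of Theorem~\ref{t:sep} is the single sentence that, with the stated adjustments, the proofs of Theorem~\ref{t:sestr} and Corollary~\ref{c:sestr} carry over unchanged. Your additional analysis of the two boundary arcs --- in particular the observation that one must still verify $\sgn\bigl(\frac{\rmd}{\rmd\tau}d_0\bigr)=\sgn\bigl(\frac{\rmd}{\rmd\tau}U_0^*\bigr)=\sgn\bigl(\frac{\rmd}{\rmd\tau}r_1\bigr)$ because $\tilde r_0$ is slaved to $r_1$ through the fixed boundary condition at $r_0$ --- goes beyond what the paper records, and the loose end you honestly flag for Dirichlet/Robin conditions is equally unaddressed there.
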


\begin{proof}
With the adjustments the proofs are the same as those of Theorem~\ref{t:sestr} and Corollary~\ref{c:sestr}, respectively.
%All statements except those about $U_0(r_0)$ and $U_{N+1}(r_{N+1})$ follow immediately from the first observation at beginning of this section. 
%
%Concerning $U_0(r_0)$ and $U_{N+1}(r_{N+1})$ note that their dynamics is determined by to the ordering of trajectories from the proof of Theorem~\ref{t:sestr}. This means growth (decay) of $U_j^*$ implies pointwise growth (decay) of $U_j(x)$.
\end{proof}

%Note that boundary conditions do not directly enter. They (only) influence the existence problem in the Definition of $\calD_N$.

\medskip
\begin{theorem}
Assume unbounded $D$, $r\in\calD_N$ and that $U$ converges to constant states in unbounded directions. Let $j\in\{0,N+1\}$. If $|r_j|=\infty$ then $\sgn(\frac{\rmd}{\rmd\tau}r_j)$ is constant, and nonzero if $N>1$. If in addition $\H$ has a unique saddle in $\{U>0\}$ and $N>1$ then $\sgn(\frac{\rmd}{\rmd\tau}r_j) = \sgn (r_j)$.
%\begin{enumerate}
%\item Suppose $r_0=-\infty$ and $U_0(x)$ converges to an equilibrium as $x\to -\infty$. Then $\sgn(\frac{\rmd}{\rmd\tau}r_1)$ is constant, and nonzero if $N>1$. If $\H$ has a unique saddle in $\{U>0\}$ and $N>1$ then $\frac{\rmd}{\rmd\tau}r_1<0$
%
%\item Suppose $r_{N+1}=\infty$ and $U_N(x)$ converges to an equilibrium as $x\to \infty$. Then $\sgn(\frac{\rmd}{\rmd\tau}r_N)$ is constant, and nonzero if $N>1$. If $\H$ has a unique saddle in $\{U>0\}$ and $N>1$ then $\frac{\rmd}{\rmd\tau}r_N>0$
%\end{enumerate}
\end{theorem}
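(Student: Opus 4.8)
The plan is to reduce, via the reflection symmetry $x\mapsto-x$ of \eqref{e:sestr-dyn}, to the case $j=0$ (i.e.\ $r_0=-\infty$) and study the leftmost pulse $r_1$; reflecting then gives the assertion for $j=N+1$ with $r_1$ replaced by $r_N$ and $-\infty$ by $+\infty$. Everything takes place in the $(U,\partial_x U)$ phase plane of the scalar large-scale problem $D_u\partial_{xx}U=-\H(U)$, which is Hamiltonian with conserved energy $\mathcal E=\tfrac12 D_u(\partial_x U)^2+\Phi(U)$, $\Phi(U):=\int_0^U\H(s)\,\rmd s$; an equilibrium $(U_*,0)$ is a centre or a saddle according as $\Phi$ has a local minimum or maximum at $U_*$.

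I would first analyse the semi-infinite segment $(-\infty,r_1)$, on which $U>0$, $U(r_1)=0$, and by hypothesis $U(x)\to U_\infty$ as $x\to-\infty$; then $\partial_x U\to0$, so $\H(U_\infty)=0$, and since in a one-degree-of-freedom Hamiltonian system a non-constant orbit can converge to an equilibrium as $x\to-\infty$ only if that equilibrium is a saddle, $\Phi$ has a local maximum at $U_\infty$ and the segment is the branch of $W^u(U_\infty,0)$ reaching $\{U=0\}$, at $(0,P_1^-)$ with $P_1^-:=\partial_x U(r_1-)<0$. Because $r(\tau)\in\calD_N$ makes $U(\tau,\cdot)$ bounded and $C^1$ in $\tau$, the function $P_1^-(\tau)$ is continuous, the backward orbit stays bounded, and by hyperbolicity of the limiting saddle its $\alpha$-limit $U_\infty$ is locally constant, hence constant; so the left segment is a fixed orbit and $P_1^-=-P_*$, with $\tfrac12 D_u P_*^2=\Phi(U_\infty)-\Phi(0)$, is a $\tau$-independent negative constant. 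Moreover, since $U>0$ on all of $(-\infty,r_1)$, reflecting this segment about $x=r_1$ shows that the orbit through $(0,P_*)$ keeps $U>0$ on all of $(r_1,+\infty)$ and hence never returns to $\{U=0\}$ in finite forward length.

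Next I would read off the sign from Definition~\ref{d:DN}(3): $\sgn(\dot r_1)=\sgn\!\bigl(C(P_1^-,P_1^+)\bigr)=\sgn(P_1^-+P_1^+)=\sgn(P_1^+-P_*)$, where $P_1^+:=\partial_x U(r_1+)$. If $N=1$ and $D=\R$, both segments adjacent to $r_1$ are unbounded, reversibility forces $P_1^+=-P_1^-=P_*$, and $\dot r_1=0$ --- the stationary reflection-symmetric pulse. For $N>1$ the interpulse segment $(r_1,r_2)$ exists with $U>0$ and an interior maximum $U_1^*$, so $P_1^+>0$ and $\tfrac12 D_u(P_1^+)^2=\Phi(U_1^*)-\Phi(0)$; by the last observation above $P_1^+\neq P_*$ (otherwise $(r_1,r_2)$ would coincide with the orbit that stays positive, contradicting $r_2<\infty$), and since $P_1^+$ is continuous in $\tau$ while $P_1^+-P_*$ never vanishes, $\sgn(\dot r_1)$ is constant and nonzero. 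For the final assertion the unique-saddle hypothesis makes $U_\infty$ the only local maximum of $\Phi$ in $\{U>0\}$; together with the sign of $\H$ there (so that $\H<0$ for $U>U_\infty$, whence the interpulse orbit has no turning point beyond $U_\infty$) this forces $U_1^*<U_\infty$, hence $\Phi(U_1^*)<\Phi(U_\infty)$, i.e.\ $P_1^+<P_*$ and $\dot r_1<0=\sgn(r_0)$; reflecting gives $\dot r_N>0=\sgn(r_{N+1})$.

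Two steps need care. The first is upgrading ``bounded, $C^1$ in $\tau$, convergent as $x\to-\infty$'' to the genuine $\tau$-independence of $U_\infty$ and hence of $P_1^-$: this uses hyperbolicity and isolatedness of the limiting saddle together with uniqueness of the connecting $W^u$-branch --- i.e.\ the structure packaged into $\calD_N$, not merely continuity. The second, which I expect to be the real crux, is pinning the sign under the unique-saddle hypothesis: one must exclude an interpulse orbit with $U_1^*>U_\infty$ (equivalently $P_1^+>P_*$, an orbit of higher energy than the separatrix through $U_\infty$), which requires controlling $\H$ on all of $\{U>U_\infty\}$ --- a centre of $\H$ beyond $U_\infty$ would a priori permit such orbits and leave only the weaker conclusion that the sign is constant but otherwise undetermined. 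Making the hypothesis on $\H$ precise enough to rule this out, and checking it against the model class of \S\ref{s:apex} (where $\H(U)=\const_1-\const_2U$), is where the remaining work lies.
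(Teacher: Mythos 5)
Your proof is correct and follows essentially the same route as the paper's: reduction to $r_0=-\infty$ by reflection, identification of the unbounded segment with the unstable manifold of a saddle (so $\partial_x U(r_1-)$ is constant in $\tau$), the stable-manifold dichotomy separating $N=1$ from constant nonzero sign, and separatrix confinement under the unique-saddle hypothesis for the final sign. The subtlety you flag about a possible centre beyond the saddle is not treated in the paper either --- its proof simply asserts that trajectories with $U\geq 0$ meeting $\{U=0\}$ turn around at $U<U_*$ --- so your Hamiltonian/energy framing if anything makes that step more explicit.
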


Note that the models in \S\ref{s:apex} have a unique saddle if $\mu>0$. %Pulses are symmetric for $n=1$ and thus $\frac{\rmd}{\rmd\tau}r_1=0$ for $N=1$ and $D=\R$ or symmetric boundary conditions. 

\begin{proof}
We consider only $r_0=-\infty$ as $r_{N+1}=\infty$ follows by symmetry. The convergence to an equilibrium implies that this must be a saddle point (since $n=1$) and $U_0(x):=U(x)$ for $x\leq r_0$ equals its unstable manifold up to the point $(U_0(r_1), \partial_x U_0(r_1-))$. Therefore, $\partial_x U_0(r_1-)$ is constant in time. The stable mani\-fold is the reflection of the unstable one about the $U$-axis, and forms a separatrix in the planar phase space. Hence, for all $\tau$ either $\partial_x U_0(r_1+) \neq -\partial_x U_0(r_1-)$, or $\partial_x U_0(r_1+)\equiv -\partial_x U_0(r_1-)$. The latter implies that $U_1$ lies in the stable manifold of the saddle, which requires $N=1$ and $\frac{\rmd}{\rmd\tau}r_1=0$. 

Concerning the former, following the arguments of the proof of Theorem~\ref{t:sestr} (trajectory ordering and the definition of $\calD_N$) shows that $\sgn(\frac{\rmd}{\rmd\tau}r_1)$ is constant in time. See Figure~\ref{f:large}(a).

For $N>1$, the point $(U(r_1),\partial_xU(r_1+))$ cannot lie in the stable manifold of the saddle point, and thus $\partial_x U_0(r_1-) \neq -\partial_x U_0(r_1+)$, which implies $\frac{\rmd}{\rmd\tau} r_1\neq0$ by Hypothesis~\ref{h:DN}(3).

If the saddle is unique in $\{U>0\}$ then $U(r_1)=0$ together with $U>0$ for $x< r_1$ implies  that the saddle has $U$-coordinate $U_*>0$. Uniqueness of the saddle and that its stable and unstable manifolds form separatrices imply the following: trajectories with $U\geq 0$ which intersect $\{U=0\}$ intersect the $U$-axis in the interval $[0,U_*)$. See Figure~\ref{f:large}(a). As in the proof of Theorem~\ref{t:sestr} it follows that $\frac{\rmd}{\rmd\tau} r_1< 0$ if $N>1$. 
\end{proof}

\end{document}